\keywords{Modal mu-calculus; regular cardinal; continuous function;
  aleph$_1$; omega$_1$; closure ordinal; ordinal sum.}
\begin{document}

\newcommand{\dedication}{\mbox{In memory of Zolt\'an
    \'Esik}}

\title[\texorpdfstring{$\aleph_{1}$ and the modal $\mu$-calculus}{Aleph1 and the modal mu-calculus}]%
  {\texorpdfstring{$\aleph_{1}$ and the modal $\mu$-calculus\\[2mm] \dedication}{Aleph1 and the model mu-calculus, \dedication}}

\author[M. J. Gouveia]{
  Maria Jo\~{a}o Gouveia
}
\address{
  Faculdade de Ci\^{e}ncias, Universidade de Lisboa, Portugal
}
\email{mjgouveia@fc.ul.pt}
\thanks{Partially supported by FCT under grant
  SFRH/BSAB/128039/2016.}

\author[L. Santocanale]{Luigi Santocanale}
\address{
  LIS, CNRS UMR 7020, Aix-Marseille Universit\'e, France}
\email{luigi.santocanale@lis-lab.fr}


\begin{abstract}
  For a regular cardinal $\kappa$, a formula of the modal
  $\mu$-calculus is \continuous in a variable $x$ if, on every model,
  its interpretation as a unary function of $x$ is monotone and
  preserves unions of \dirset{s}. We define the fragment $\C(x)$ of
  the modal $\mu$-calculus and prove that all the formulas in this
  fragment are \continuous[$\aleph_{1}$].
  For each formula $\phi(x)$ of the modal $\mu$-calculus, we construct
  a formula $\psi(x) \in \C(x)$ such that $\phi(x)$ is \continuous,
  for some $\kappa$, if and only if $\phi(x)$ is equivalent to
  $\psi(x)$. Consequently, we prove that (i) the problem whether a
  formula is \continuous for some $\kappa$ is decidable, (ii) up to
  equivalence, there are only two fragments determined by continuity
  at some regular cardinal: the fragment $\C[0](x)$ studied by
  Fontaine and the fragment $\C(x)$.
  We apply our considerations to the problem of characterizing closure
  ordinals of formulas of the modal $\mu$-calculus.
  An ordinal $\alpha$ is the closure ordinal of a formula $\phi(x)$ if
  its interpretation on every model converges to its least fixed-point
  in at most $\alpha$ steps and if there is a model where the
  convergence occurs exactly in $\alpha$ steps.
  We prove that $\omega_{1}$, the least uncountable ordinal, is such a
  closure ordinal. Moreover, we prove that closure ordinals are closed
  under ordinal sum. Thus, any formal expression built from
  $0,1,\omega,\omega_{1}$ by using the binary operator symbol $+$ gives
  rise to a closure ordinal.
\end{abstract}


\maketitle

\section{Introduction}

The propositional modal $\mu$-calculus \cite{Kozen83,LenziIntro} is a
well established logic in theoretical computer science, mainly due to
its convenient properties for the verification of computational
systems.  It includes as fragments many other computational logics,
PDL, CTL, CTL${}^{\ast}$, its expressive power is therefore highly
appreciated.  
Also, being capable to express all the bisimulation invariant
properties of transition systems that are definable in monadic second
order logic, the modal $\mu$-calculus can itself be considered as a
robust fragment of an already very expressive logic \cite{janwal}.
Despite its strong expressive power, this logic is still considered as
a tractable one. 
\correction{Its model checking problem, even if in the class
  $\textrm{UP} \cap \textrm{co-UP}$ \cite{jurdzinski}, becomes
  polynomial as soon as some critical parameters are fixed or
  restricted classes of models are considered
  \cite{Obdrzalek2003,alberucci09,bojanczyk}.}{
  Its model checking problem, known to be in the class
  $\textrm{UP} \cap \textrm{co-UP}$ \cite{jurdzinski}, has recently
  been proved to be quasi-polynomial and fixed-parameter tractable
  \cite{CaludeJKL017}. Moreover, this problem becomes polynomial if
  some restricted classes of models are considered
  \cite{Obdrzalek2003,alberucci09,bojanczyk}.}
The widespread interest for this logic has triggered further
researches that spread beyond the realm of verification: these concern
the expressive power \cite{bradfield,BGL2002}, axiomatic bases
\cite{walukiewicz}, algebraic and order theoretic approaches
\cite{completions}, deductive systems \cite{niwinski,Studer2008}, and
the semantics of functional programs \cite{Fortier}.

The present paper lies at the intersection of two lines of research on
the modal $\mu$-calculus, on continuity \cite{Fontaine08} and on
closure ordinals \cite{Czarnecki,AfshariL13}.  Continuity of monotone
functions is a fundamental phenomenon in modal logic, on which
well-known uniform completeness theorems rely
\cite{Sahlqvist,Sambin,Jonsson}.  Fontaine~\cite{Fontaine08}
characterized the formulas of the modal $\mu$-calculus that give rise
to continuous functions on Kripke models.  It is well-known, for
example in categorical approaches to model theory \cite{adamek1994},
that the notion of continuity of monotone functions (and of functors)
can be generalized to \kcontinuity, where the parameter $\kappa$ is an
infinite regular cardinal.
In the work \cite{ITA2002} one of the authors proved that
\continuous[$\aleph_{1}$] functors are closed under their greatest
fixed-points.  Guided by this result, we present in this paper a
natural syntactic fragment $\C(x)$ of the modal $\mu$-calculus whose
formulas are \continuous[$\aleph_{1}$]---that is, they give rise to
\continuous[$\aleph_{1}$] monotone unary functions of the variable $x$
on arbitrary models. A first result that we present here is that
\emph{the fragment $\C(x)$ is decidable}: for each $\phi(x) \in \Lmu$,
we construct a formula $\psi(x) \in \C(x)$ such that $\phi(x)$ is
\continuous[$\aleph_{1}$] on every model if and only if $\phi(x)$ and
$\psi(x)$ are semantically equivalent formulas.  We borrow some
techniques from \cite{Fontaine08}, yet the construction of the formula
$\psi(x)$ relies on a new notion of normal form for formulas of the
modal $\mu$-calculus. A closer inspection of our proof uncovers a
stronger fact: the formulas $\phi(x)$ and $\psi(x)$ are equivalent if
and only if, for some regular cardinal $\kappa$, $\phi(x)$ is
\continuous on every model. The stronger statement implies that we
cannot find a fragment $\CC[\kappa](x)$ of \continuous formulas for
some cardinal $\kappa$ strictly larger than $\aleph_{1}$; any such
hypothetical fragment collapses, semantically, to the fragment
$\C(x)$. 
\correction{Since our conference paper \cite{GS17} we were informed
  that Fontaine's work \cite{Fontaine08} had been extended to a
  preprint that finally appeared in print \cite{FontaineVenema2018}.
  In the extended version the fragment $\C(x)$ is also studied, yet
  the semantic property pinpointed there and corresponding to the
  syntactic fragment $\C(x)$ is different from the property that we
  consider, \kcontinuity.}{
  In \cite{FontaineVenema2018}, an extended journal version of the
  conference paper \cite{Fontaine08}, the fragment $\C(x)$ is also
  studied, yet the semantic property pinpointed there and
  corresponding to the syntactic fragment $\C(x)$ is different from
  the property that we consider, \kcontinuity.  } Say that a formula
of the modal $\mu$-calculus has the \emph{finite width property} if,
whenever it is satisfied in a tree model, it is satisfied in a
finitely branching subtree of this model.  It is proved in
\cite{FontaineVenema2018} that a formula has the finite width property
if and only if it is equivalent to a formula in $\C(x)$.  Combining
these results with ours, we deduce a quite surprising statement: a
formula has the finite width property if and only if it is
\kcontinuous for some regular cardinal $\kappa$. While it is easy to
guess why the finite width property implies
\kcontinuity[$\aleph_{1}$], the \correction{other direction of the
  equivalence just stated}{converse implication} appears to be a
non-obvious strong statement, whose potential consequences and
applications need to be uncovered.

Our interest in \continuity[$\aleph_{1}$] was wakened once more when
researchers started investigating closure ordinals of formulas of the
modal $\mu$-calculus \cite{Czarnecki,AfshariL13}.  The notion of
closure ordinal was studied in the context of first order inductive
definitions~\cite{EIAS}. Closure ordinals for the modal $\mu$-calculus
are more directly related to \emph{global} inductive definability, see
\cite{BM1978}, in that a class of structures is being tested, not a
single structure. We consider closure ordinals as a wide field where
the notion of \kcontinuity can be exemplified and applied; the two
notions---\kcontinuity and closure ordinals---are, in our opinion,
naturally intertwined and the results we present in this paper are in
support of this thesis.
An ordinal $\alpha$ is the closure ordinal of a formula $\phi(x)$ if
(the interpretation of) this formula (as a monotone unary function of
the variable $x$) converges to its least fixed-point $\mu_{x}.\phi(x)$
in at most $\alpha$ steps in every model and, moreover, there exists
at least one model in which the formula converges exactly in $\alpha$
steps.  Not every formula has a closure ordinal. For example, the
simple formula $\nec[\;] x$ has no closure ordinal; more can be said,
this formula is not \kcontinuous for any $\kappa$.  As a matter of
fact, if a formula $\phi(x)$ is \kcontinuous (that is, if its
interpretation on every model is \kcontinuous), then it has a closure
ordinal $\clor(\phi(x))\leqslant \kappa$---here we use the fact that,
using the axiom of choice, a cardinal can be identified with a
particular ordinal, for instance $\aleph_{0} = \omega$ and
$\aleph_{1} = \omega_{1}$. Our results on \continuity[$\aleph_{1}$]
show that all the formulas in $\C(x)$ have a closure ordinal bounded
by $\omega_{1}$.
For closure ordinals, our results are threefold. Firstly we prove that
\emph{the least uncountable ordinal $\omega_1$ belongs to the set
  $\OrdLmu$} of all closure ordinals of formulas of the propositional
modal $\mu$-calculus.
Secondly, we prove that \emph{$\OrdLmu$ is closed under ordinal
  sum}. It readily follows that any formal expression built from
$0,1,\omega,\omega_1$
by using the binary operator symbol $+$ gives rise to an ordinal in
$\OrdLmu$.  Let us recall that \Czarnecki \cite{Czarnecki} proved that
all the ordinals $\alpha < \omega^{2}$ belong to $\OrdLmu$. Our
results generalize \Czarnecki's construction of closure ordinals and
give it a rational reconstruction---every ordinal strictly smaller
than $\omega^{2}$ can be generated by $0,1$ and $\omega$ by repeatedly
using the sum operation.
Finally, even considering that our work does not yield methods to
exclude ordinals from $\OrdLmu$, the fact that there are no relevant
fragments of the modal $\mu$-calculus determined by continuity at some
regular cardinal other than $\aleph_{0}$ and $\aleph_{1}$ implies that
\emph{the methodology} (adding regular cardinals to $\OrdLmu$ and
closing them under ordinal sum) \emph{used until now to construct new
  closure ordinals} for the modal $\mu$-calculus \emph{cannot be
  further exploited}.

Let us add some final considerations.  The fragment $\C(x)$ of the
propositional modal \mbox{$\mu$-calculus} has imposed itself by its
robustness, which can be recognised in our work as well as in
\cite{FontaineVenema2018}.  We believe $\C(x)$ is worth investigating
further in order to enlighten a hidden dimension (and thus new tools,
new ideas, new perspectives, etc.) of the modal $\mu$-calculus and of
fixed-point logics. As an example, take the modal $\mu$-calculus on
deterministic models: states have at most one successor and it is
immediate to conclude that every formula is \continuous[$\aleph_{1}$]
on these models. Whether this and other observations can be exploited
(towards understanding alternation hierarchies or reasoning using
axiomatic bases, for example) is part of future research.
We also believe that the scope of this work, as well as of the
problems studied within, goes much beyond the pure theory of the modal
$\mu$-calculus.
  For example, our interest in closure ordinals stems from a previous
  proof-theoretic investigation of induction and coinduction
  \cite{Fortier,ITA2002}.
  In these works ordinal notations are banned from the syntax
    because of an alleged non-constructiveness of the set theory
    needed to represent ordinals.   However, also considering
  that elegant constructive theories of ordinals exist, see
  e.g. \cite{AST}, the present work encourages us to develop
  alternative proof-theoretic frameworks based on ordinals.

The paper is structured as follows. In Section~\ref{sec:background} we
introduce the notion of \kcontinuity. In the following
Section~\ref{sec:Kleene} we illustrate the interactions between
\kcontinuity and least/\GFP{s} of monotone maps.  In
Section~\ref{sec:mucalculus} we present the modal $\mu$-calculus and
some of the related theory that we shall need in the following
sections. Section~\ref{sec:continuousfragment} presents our results on
the fragment $\C(x)$.  The following Section~\ref{sec:models} presents
a tool---roughly speaking the observation that various kind of
submodels can be logically described modulo the introduction of a new
propositional variable---that is repeatedly used in the rest of the
paper to obtain results on closure ordinals. In
Section~\ref{sec:closureordinals} we argue that the least uncountable
ordinal is a closure ordinal for the modal $\mu$-calculus. In the
final Section~\ref{sec:ordinalsum} we argue that $\OrdLmu$, the set of
closure ordinal of formulas of the modal $\mu$-calculus, is closed
under ordinal sum.

\bigskip

{\bf Acknowledgment.} The authors are thankful to the anonymous
reviewers for their analytical reading and for their valuable comments
by which
the presentation of this research could substantially improve.


\tableofcontents

\section{\continuous maps}
\label{sec:background}

In this section we consider \continuity of monotone maps between
powerset Boolean algebras, where the parameter $\kappa$ is an infinite
regular cardinal. If $\kappa = \aleph_{0}$, then \continuity coincides
with the usual notion of continuity as found for example in
\cite{Fontaine08,FontaineVenema2018}.  
The reader might find further information in the monograph
\cite{adamek1994} where this notion is presented in the more general
context of categories.

\medskip

In the following $\kappa$ is an infinite regular cardinal, $A$ and $B$
are sets, for which $P(A)$ and $P(B)$ denote the corresponding
powerset Boolean algebras, and $f : P(A) \rto P(B)$ is a monotone map.
We shall say that a subset $X$ of a set $A$ is \emph{\ksmall} if
$\card X < \kappa$. For example, a set $X$ is \ksmall[$\aleph_{0}$] if
and only if it is finite, and it is \ksmall[$\aleph_{1}$] if and only
if it is countable.  Regularity of the cardinal $\kappa$ essentially
amounts to the following property: if ${\cal J}$ is a \ksmall
collection of \ksmall subsets of $A$, then $\bigcup {\cal J}$ is
\ksmall.

\begin{defi}
  A subset ${\cal I} \subseteq P(A)$ is a \emph{\dirset} if
  every collection ${\cal J} \subseteq {\cal I}$ with
  $\card {\cal J} < \kappa$ has an upper bound in ${\cal I}$.
  \label{def:kappacontinuity}
  A map $f : P(A) \rto P(B)$ is \emph{\continuous} if
  $f(\bigcup {\cal I}) = \bigcup f({\cal I})$, whenever ${\cal I}\subseteq P(A)$ is
  a \dirset.
\end{defi}
  
Observe that if $\kappa'$ is a regular cardinal and
$\kappa < \kappa'$, then a $\kappa'$-directed set is also a
\dirset. Therefore, if $f$ is \continuous, then it also preserves
unions of $\kappa'$-directed sets, thus it is $\kappa'$-continuous as
well.  Also, notice that the wording ``monotone \kcontinuous'' is
redundant: if $f$ is \kcontinuous, then it is monotone, since if
$X \subseteq Y$, then $\set{X,Y}$ is \kdirected, so
$f(Y) = f(Y \cup X) = f(X) \cup f(Y)$, so $f(X) \subseteq f(Y)$.

For each subset $X$ of $A$, define
\begin{align*}
  \Ik(X) & := \set{ X' \mid X' \subseteq X, X' \text{ is \ksmall}}\,.
\end{align*}
Notice that $\bigcup \Ik(X) = X$ and $\Ik(X)$ is a \dirset. For this
latter property, it is useful to note that if
$\set{X_{i}\subseteq X \mid i \in I}$ is a \ksmall set of \ksmall
subsets of $X$, then the union $\bigcup \set{X_{i}\mid i \in I}$ is
still \ksmall, so it belongs to $\Ik(X)$.

\begin{prop}
  \label{prop:ksmall}
  A subset $X$ of $A$ 
  is \ksmall if and only if, for every \dirset ${\cal I}$,
  $X \subseteq \bigcup {\cal I}$ implies $X \subseteq I$ for some
  $I \in {\cal I}$.
\end{prop}
\begin{proof}
  We firstly prove that if $X$ is \ksmall and
  ${\cal I} \subseteq P(A)$ is a \dirset such that
  $X \subseteq \bigcup {\cal I}$, then there exists $I \in {\cal I}$
  with $X \subseteq I$.  For each $a \in X$, let $I_{a} \in {\cal I}$
  such that $a \in I_{a}$. Then ${\cal J} = \set{I_{a}\mid a \in X}$
  is a subfamily of ${\cal I}$ with $\card {\cal J} < \kappa$, whence
  there exists $I \in {\cal I}$ with $I_{a} \subseteq I$, for each
  $a \in X$; whence $X \subseteq I$.

  For the converse, recall that $X = \bigcup\mathcal{I}_{\kappa}(X)$
  and that $\mathcal{I}_{\kappa}(X)$ is a \dirset.  Suppose therefore
  that, for every \dirset ${\cal I}$, $X \subseteq \bigcup {\cal I}$
  implies $X \subseteq I$ for some $I \in \mathcal{I}$. Applying this
  property when ${\cal I} =
  \mathcal{I}_{\kappa}(X)$ 
  yields $X \subseteq X'$ for some \ksmall $X' \subseteq X$.
  Therefore $X' = X$ and $X$ is \ksmall.
\end{proof}

\begin{prop}
  \label{prop:2characcont}
  A monotone map $f : P(A) \rto P(B)$ is $\kappa$-continuous if
  and only if, for every $X \in P(A)$,
\begin{align*}
    f(X)& =\bigcup\set{f(X')\mid X'\subseteq X,
      X' \text{ is \ksmall} }\,.
\end{align*}
\end{prop}
\begin{proof}
  Let $f : P(A) \rto P(B)$ be a \kcontinuous monotone map. Notice
  that the equation above is $f(\bigcup \Ik(X)) = \bigcup f(\Ik(X))$,
  since $X = \bigcup \Ik(X)$. The equation holds since $\Ik(X)$ is
  \kdirected and we are supposing that $f$ is \kcontinuous.

  Conversely suppose that $f : P(A) \rto
  P(B)$ is a monotone map such that $f(X)=\bigcup
  f({\Ik(X)})$ for every $X \in P(A)$. 
  Also let 
  ${\cal I\subseteq
    P(A)}$ be a \kdirset, so we aim to show that $f(\bigcup {\cal I})
  = \bigcup f({\cal I})$. \correction{By hypothesis}{Since
    $f$ is \continuous,} $f(\bigcup {\cal I}) = \,\bigcup
    \,f({\Ik(\bigcup {\cal I})})\,$.  Since
    $f$ is monotone, we have $\bigcup f({\cal I})\subseteq f(\bigcup
    {\cal
      I})$ and therefore we only need to verify the opposite
    inclusion.
  Let $Y$ be a \ksmall set contained in $\bigcup {\cal
    I}$.  By Proposition~\ref{prop:ksmall} there exists $Z \in {\cal
    I}$ such that $Y \subseteq
  Z$. 
  Hence for every $Y \in \Ik(\bigcup {\cal I})$ there exists $Z\in
  \cal I$ such that $Y\subseteq Z$ and so also $f(Y)\subseteq
  f(Z)$. Thus, $\bigcup \,{f(\Ik(\bigcup {\cal I}))}\subseteq \bigcup
  {f(\cal I)}$.
  Consequently, we have
  \begin{align*}
    f(\bigcup {\cal I}) =\,\bigcup \,f({\Ik(\bigcup {\cal I})})\,\subseteq
    \bigcup f({\cal I})\,,
 \end{align*}
  proving the opposite inclusion.
\end{proof}

Next we extend the notion of \kcontinuity to functions of many
variables, that is, to functions whose domain is a finite product of
the form $P(A_{1}) \times \ldots \times P(A_{n})$, the ordering being
coordinate-wise.  To achieve this goal, we observe that there is a
standard isomorphism
$\psi : P(A_{1} \dunion \ldots \dunion A_{n}) \rto P(A_{1}) \times
\ldots \times P(A_{n})$, where $\dunion$ denotes the disjoint union.
Therefore, we say that a monotone function
$f : P(A_{1}) \times \ldots \times P(A_{n}) \rto P(B)$ is \kcontinuous
if the function of one variable
$f \circ \psi : P(A_{1} \dunion \ldots \dunion A_{n}) \rto P(B)$ is
\kcontinuous.
The standard isomorphism associates to a subset
$S \subseteq A_{1} \dunion \ldots \dunion A_{n}$ the tuple
$\psi(S) = \langle S \cap A_{1},\ldots ,S \cap A_{n}\rangle$.
The next Lemma (that, for simplicity, we state and prove for $n = 2$)
states the expected property of \kcontinuous functions of many
variables: these functions are \kcontinuous exactly when they are
\kcontinuous in each variable.
\begin{lem}
  \label{lemma:bicontinuity}
  A monotone map $f : P(A_{1})\times P(A_{2}) \rto P(B)$ is
  \kcontinuous w.r.t. the coordinate-wise order on
  $P(A_{1})\times P(A_{2})$ if and only if it is \kcontinuous in every
  variable.
\end{lem}
\begin{proof}
  Obviously if 
  $f\circ \psi\colon P(A_1 \dunion A_2) \rto P(B)$ is \kcontinuous,
  then it is \kcontinuous when we fix a subset,
  say $X \subseteq A_1$. Indeed, a family of the form
  $\set{X \dunion Y_{i} \mid i \in I,\, Y_{i }\subseteq A_{2}} $ is
  \kdirected if and only if $\set{Y_{i} \subseteq A_2\mid i \in I } $
  is \kdirected.

Conversely, suppose that
$f\circ \psi\colon P(A_1 \dunion A_2) \rto P(B)$ is \kcontinuous in
every variable.  First observe that, for any families
$\mathcal X=\set{X_{i} \subseteq A_1\mid i \in I }$ and
$\mathcal Y=\set{Y_{i} \subseteq A_2\mid i \in I } $, we have
that
\[\bigcup_{i}\set{X_i \dunion Y_{i} \mid i \in I} =
  \bigcup_{i,j}\set{X_i \dunion Y_{j} \mid i,j \in I}
  =\bigcup_{i}(X_{i} \dunion \bigcup_{j} Y_{j})\] and when
$\set{X_i \dunion Y_{i} \mid i \in I} $ is \kdirected also
$\mathcal X$, $\mathcal Y$ and
$\set{X_i\dunion Y_{j} \mid i,j \in I } $ are \kdirected.
Consequently, given a \kdirset
$\set{X_{i} \dunion Y_{i} \mid i \in I}$ with $X_{i} \subseteq A_1$
and $Y_{i } \subseteq A_2$, the following holds
\begin{align*}
  (f\circ \psi)(\bigcup_{i} X_{i} \dunion Y_{i}) & = (f\circ
  \psi)(\bigcup_{i,j} X_{i} \dunion Y_{j}) 
  = (f\circ
  \psi)(\bigcup_{i} (X_{i} \dunion \bigcup_{j} Y_{j})) \\
  & =\bigcup_{i} (f\circ \psi)(X_{i} \dunion \bigcup_{j} Y_{j})
  =\bigcup_{i} \bigcup_{j} (f\circ \psi)(X_{i} \dunion Y_{j})\,,
  \tag*{since $f$ is \continuous in each
    variable,}\\
  &= \bigcup_{i,j} (f\circ \psi)(X_{i} \dunion Y_{j}) = \bigcup_{i}
  (f\circ \psi)(X_{i} \dunion Y_{i})\,.
\end{align*}
 \ENDOFPROOF[Lemma~\ref{lemma:bicontinuity}]
\end{proof}


\section{Fixed-points of \kcontinuous maps} 
\label{sec:Kleene}
The interplay between \continuity of monotone maps (recall that
$\kappa$ is assumed to be an infinite regular cardinal) and their
least and greatest \FP{s} is the focus of the present section. On the
one hand, the Knaster-Tarski theorem \cite{tars:deci55} states that
the \LFP of a monotone map $f : P(A) \rto P(A)$ is the set
$\bigcap \set{ X \subseteq A \mid f(X) \subseteq X}$.  On the other
hand, Kleene's fixed-point theorem states that the \LFP of an
\continuous[$\aleph_{0}$] map $f$ is constructible by iterating
$\omega$-times $f$ starting from the empty set, namely it is equal
to $\bigcup_{n \geq 0} f^{n}(\emptyset)$.  Generalisations of Kleene's
theorem appeared later and give ways to build the \LFP of monotone
maps by ordinal approximations; see \cite{LassezNS82} for an
historical account of this family of theorems.

The first result we present in this section is a generalised Kleene's
fixed-point theorem specifically suited to \continuous maps (
we do not claim the authorship of Proposition~\ref{prop:convkcontinuous},
even if we could not find it stated as it is in the literature).

\begin{defi}
  \label{def:approximants}
  Let $f : P(A) \rto P(A)$ be a monotone map. The
  \emph{approximants} $f^{\alpha}(\emptyset)$, with $\alpha$ an ordinal, 
    are inductively defined as follows:
  \begin{align*}
    f^{\alpha + 1}(\emptyset) & := f(f^{\alpha}(\emptyset))\,,
    \qquad
    f^{\alpha}(\emptyset)  := \textstyle{\bigcup_{\beta < \alpha}}\,
    f^{\beta}(\emptyset)\,
    \quad\text{when $\alpha$ is a limit ordinal.}
  \end{align*}
  We say that \emph{$f$ converges to its least fixed-point in at most
    $\alpha$ steps} if $f^{\alpha}(\emptyset)$ is a fixed-point
  (necessarily the least one) of $f$.  We say that \emph{$f$ converges
    to its least fixed-point in exactly $\alpha$ steps} if
  $f^{\alpha}(\emptyset)$ is a fixed-point of $f$ and
    $f^{\beta}(\emptyset)\subsetneq f^{\beta + 1}(\emptyset)$, for
    each ordinal $\beta < \alpha$.
  \end{defi}
  Let us recall that in set theory a cardinal $\kappa$ is identified with the
  least ordinal of cardinality equal to $\kappa$.  We exploit this,
  notationally, in the next proposition.
\begin{prop}
  \label{prop:convkcontinuous}
  If $f : P(A) \rto P(A)$ is a \continuous monotone map, then $f$
  converges to its least fixed-point in at most $\kappa$ steps.
\end{prop}
\begin{proof}
 Let us argue that $f^{\kappa}(\emptyset)$ is a \correction{fixed-point}{prefixed-point}
  of $f$:
  \begin{align*}
    f(\,f^{\kappa}(\emptyset)\,) & = f(\,\bigcup_{\alpha <
      \kappa}f^{\alpha}(\emptyset)\,) 
     = \bigcup_{\alpha < \kappa} f(f^{\alpha}(\emptyset))\,
    \subseteq \bigcup_{\alpha < \kappa}
    f^{\alpha}(\emptyset) = f^{\kappa}(\emptyset)\,
  \end{align*}
  since the regularity of $\kappa$ implies that
  $\set{ f^{\alpha}(\emptyset) \mid \alpha < \kappa}$ is a
  \dirset. Since the inclusion
  $f^{\kappa}(\emptyset) \subseteq f(f^{\kappa}(\emptyset))$ holds by
  monotonicity of $f$, $f^{\kappa}(\emptyset)$ is also a fixed-point
  of $f$.
\end{proof}
Until now we have focused on \LFP{s} of monotone maps.  \GGFP{s}
  are dual to \LFP{s}: namely, for a monotone map
  $f : P(A) \rto P(A)$, its \GFP is the largest subset $Z$ of $A$ such
  that $f(Z) = Z$; by Tarski's theorem, it is equal to
  $\bigcup \set{Z \subseteq A \mid Z \subseteq f(Z)}$.
  Propositions~\ref{prop:nucontinuous} and 
  \ref{prop:mucontinuous} relate both kind of (parametrized) fixed
  points to continuity; they are specific instances of a result stated
  for categories in \cite{ITA2002}.  To clarify their statements, let
  us recall that if $f : P(B)\times P(A) \rto P(B)$ is a monotone map,
  then, for each $X\in P(A)$, 
the unary map $f(-, X) : P(B) \rto P(B)$, $Z \mapsto f(Z,X)$, is also
monotone. Hence, we may consider the map $P(A) \rto P(A)$ that sends
$X$ to the least (resp. greatest) fixed-point of $f(-, X)$; by using
the standard $\mu$-calculus notation, we denote it by $\mu_{z}.f(z,-)$
(resp. $\nu_{z}.f(z,-)$).\footnote{
  Let us mention that later we shall emphasize the distinction
  syntax/semantics. Then, we shall use $\lfp$ and $\gfp$ in the semantics
  for the symbols $\mu$ and $\nu$, respectively, and reserve these
  symbols for the syntax.  } Let us also recall that $f$ is
\continuous w.r.t. the coordinatewise order on $P(B)\times P(A)$ if
and only if it is \continuous in every variable (see
Lemma~\ref{lemma:bicontinuity}\,).

\begin{prop}
  \label{prop:nucontinuous}
  Let $f : P(B)\times P(A) \rto P(B)$ be a \continuous monotone
  map.   If $\kappa > \aleph_{0}$ then
  $\nu_{z}.f(z,\,_{-}) : P(A) \rto P(B)$ is also \continuous.
\end{prop}
\begin{proof}
  Let us write $g(x) := \nu_{z}.f(z,x)$.  We shall show that,
    for every $b\in B$ and for every $X\in P(A)$, if $b \in g(X)$, then
    $b \in g(X')$ for some \ksmall $X'$ contained in $X$.  Having
  shown this, 
  the continuity of $g$ follows from
  Proposition~\ref{prop:2characcont}.
  \correction{ Note that the condition $b \in g(X)$ holds when there
    exists $Z \subseteq B$ such that $b \in Z$ and
    $Z \subseteq f(Z,X)$. Aiming \correction{to find}{at constructing}
    such a set $Z$, we recursively \correction{obtain}{define} a
    family $(X_n)_{n \geq 1}$ of \ksmall subsets of $X$ and a family
    $(Z_n)_{n \geq 0}$ of \ksmall subsets of \correction{$Z$}{$B$}
    satisfying $Z_n\subseteq f(Z_{n+1},X_{n+1})$.}{
    Let therefore $b \in g(X)$ and note that this condition implies
    that, for some $Z \subseteq B$, $b \in Z$ and
    $Z \subseteq f(Z,X)$; let us fix such $Z$. Aiming at constructing
    a \ksmall subset $X' \subseteq A$ such that $b \in g(X')$, we
    recursively define a family $(X_n)_{n \geq 1}$ of \ksmall subsets
    of $X$ and a family $(Z_n)_{n \geq 0}$ of \ksmall subsets of $Z$
    satisfying $Z_n\subseteq f(Z_{n+1},X_{n+1})$.  }

  For $n=0$ we take $Z_{0} := \set{b}$ which is a \ksmall subset of $f(Z,X)$.
  Now suppose we have already constructed
  a \ksmall set $Z_{n}$ that satisfies $Z_{n} \subseteq f(Z,X)$.
  Let us consider
  \begin{align*}
    \mathcal{I} & := \set{f(Z',X') \mid X' \subseteq
      X, Z' \subseteq Z \text{ and } X',Z'
      \text{ are \ksmall}}\,.
  \end{align*}
  Since
   $Z_{n} \subseteq f(Z,X) 
     = \bigcup \mathcal{I}$  and  $\mathcal{I}$ is a \dirset, by
  Proposition~\ref{prop:ksmall} there exist $Z_{n+1},X_{n+1}$ \ksmall
  such that $Z_{n} \subseteq f(Z_{n+1},X_{n+1})$.
  Moreover, $Z_{n+1} \subseteq Z \subseteq f(Z,X)$.
    
  Let now $X_{\omega} := \bigcup_{n \geq 1}X_{n}$ and
  $Z_{\omega} := \bigcup_{n \geq 0}Z_{n}$. Notice that
  $Z_{\omega}$ and $X_{\omega}$ are \ksmall, since we assume
  that $\kappa > \aleph_{0}$.  We have therefore
  \begin{align*}
    Z_{\omega} &
    =  \bigcup_{n \geq 0}Z_{n}
    \subseteq  \bigcup_{n \geq 1}f(Z_{n},X_{n}) \subseteq f(\bigcup_{n \geq 1} Z_{n},\bigcup_{n \geq 1}
    X_{n}) 
    \subseteq f(Z_{\omega},X_{\omega})\,.
  \end{align*}
  Whence $b \in Z_{\omega} \subseteq \nu_{z}.f(z,X_{\omega})$,
  with $X_{\omega} \subseteq X$ and $X_{\omega}$
  \ksmall, proving that $\nu_{z}.f(z,-)$ is \continuous.
\end{proof}

\begin{prop}
  \label{prop:mucontinuous}
  Let $f : P(B)\times P(A) \rto P(B)$ be a \continuous monotone
  map. If $\kappa \geq \aleph_{0}$ then 
$\mu_{z}.f(z,\,_{-}) : P(A) \rto P(B)$ is also \continuous.
\end{prop}
\begin{proof}
  We suppose that $f$ is \continuous, 
  $\set{X_{i} \mid i \in I}$ is a \dirset of elements of $P(A)$
  and 
  $X = \bigcup_{i \in I} X_{i}$. We are going to show that
  $\mu_{x}.f(x,X) = \bigcup_{i \in I} \mu_{x}.f(x,X_{i})$.

  Firstly, notice that the relation
  $\mu_{x}.f(x,X) \supseteq \bigcup_{i \in I} \mu_{x}.f(x,X_{i})$
  follows from monotonicity; thus we only need to prove the converse
  relation and, to this end, it is enough to show that
  $\bigcup_{i \in I} \mu_{x}.f(x,X_{i})$ is a fixed-point of
  $f(x,X)$. This goes as follows:
  \begin{align*}
    f(\bigcup_{i \in I} \mu_{x}.f(x,X_{i}),X) & = \bigcup_{i \in I}
    f(\mu_{x}.f(x,X_{i}),X) \tag*{since
      $f$ is \continuous in its first argument}
    \\
    & = \bigcup_{i \in I} f(\mu_{x}.f(x,X_{i}), \bigcup_{j \in
      I}X_{j}) \\
    & = \bigcup_{i \in I,j \in I} f(\mu_{x}.f(x,X_{i}), X_{j})
    \tag*{since $f$ is \continuous in its second argument}
    \\
    & = \bigcup_{i \in I}  f(\mu_{x}.f(x,X_{i}), X_{i}) 
      \tag*{since $\set{X_{i} \mid i \in I}$ is \kdirected}
  \\
  & = \bigcup_{i \in I} \mu_{x}.f(x,X_{i})\,.
\end{align*}
\ENDOFPROOF[Proposition~\ref{prop:mucontinuous}]
\end{proof}
Notice that the statement of Proposition~\ref{prop:mucontinuous} holds
for \continuous 
monotone maps
$f : P\times Q \rto P$, that is, we might only assume that $P$ and $Q$
are complete lattices, not powerset
algebras. 
Indeed, the corresponding proof is obtained from the proof of
Proposition~\ref{prop:mucontinuous} by replacing 
the set theoretic $\bigcup$ with the supremum symbol
$\bigvee$. Similarly, the statement of
Proposition~\ref{prop:nucontinuous} is suitable to be generalized to
posets $P$ and $Q$ satisfying appropriate conditions, see
\cite{ITA2002}.

\medskip

Let ${\cal F} = \set{f_{i} : P(A)^{n_{i}} \rto P(A) \mid i \in I }$ be
a collection of 
monotone 
operations on $P(A)$. 
We 
define \emph{the $\mu$-clone of ${\cal F}$} to be the least set of
finitary operations on $P(A)$ that contains ${\cal F}$ and the
projections and which is closed under the following operations:
substitution, taking parametrized \LFP{s} and \GFP{s}.
\begin{cor}
  \label{cor:muclone}
  Let $\kappa > \aleph_{0}$ be a regular cardinal. If all the maps in
  ${\cal F}$ are \kcontinuous, then all the maps in the $\mu$-clone of
  ${\cal F}$ are also \kcontinuous.
\end{cor}
\begin{proof}
  We shall observe that projections are \kcontinuous and that the set
  of \kcontinuous functions is closed under substitution and under the
  operations of taking least 
  and \GFP{s}.  Projections are \Lower and \Uadjoint{s}, so they
  actually preserve all unions and intersections, see \cite[\S 7.23
  and Proposition~7.31]{DP}.  For substitution, argue first that the
  composition of two \kcontinuous maps is \kcontinuous. Observe then
  that if $f_{i} : P(A) \rto P(B_{i})$ is \kcontinuous, for
  $i = 1,\ldots ,n$, then the unique map
  $\langle f_{1},\ldots ,f_{n}\rangle : P(A) \rto \prod_{i}P(B_{i})$
  such that $\pi_{i} \circ \langle f_{1},\ldots ,f_{n}\rangle = f_{i}$
  for each $i = 1,\ldots ,n$, is \kcontinuous; this is because suprema
  are computed coordinatewise in $\prod_{i}P(B_{i})$.  Therefore, if
  $f_{0},f_{1},\ldots ,f_{n}$ are \kcontinuous, then also the
  \correction{substitution}{composite}
  $f_{0} \circ \langle f_{1},\ldots ,f_{n}\rangle$ is
  \kcontinuous. For least 
  and \GFP{s} use
  Propositions~\ref{prop:nucontinuous}~and~\ref{prop:mucontinuous}.
\end{proof}


\section{The propositional modal $\mu$-calculus}
\label{sec:mucalculus}
Here we present the propositional modal $\mu$-calculus and
some known results on this logic that we shall need later.

Hereinafter 
$\Act$ is a fixed finite set of actions and $Prop$ is a
countable set of propositional variables.
The set $\Lmu$ of formulas
of the propositional modal $\mu$-calculus over $\Act$ is generated by
the following grammar:
\begin{align}
  \label{grammar:mucalculus}
  \phi & \eqdef y \mid \neg y \mid \top \mid
  \phi \land \phi  \mid \bot \mid \phi \vee \phi \mid \pos[a] \phi
  \mid \nec[a] \phi
  \mid \mu_{z}.\phi \mid \nu_{z}.\phi\,,
\end{align}
where $a \in \Act$, $y \in Prop$, and $z \in Prop$ is a positive
variable in the formula $\phi$, \ie no occurrence of $z$ is under the
scope of a negation.  
  In general, we shall use $x,x_{1},\ldots ,x_{n},\ldots $ for
  variables that are never under the scope of a negation nor bound in
  a formula $\phi$; $y,y_{1},\ldots y_{n},\ldots $ for variables that
  are free in formulas; $z,z_{1},\ldots ,z_{n},\ldots $ for variables
  that are bound in formulas.  However, this convention cannot be
  rigorously enforced, since we shall often consider the steps from a
  formula $\phi$ with a free occurrence of the variable $z$ to the
  formula $\mu_{z}.\phi$, where $z$ is bound.
We think of the grammar~\eqref{grammar:mucalculus} as a way of
specifying the abstract syntax of a formula, as if it was the
specification of an inductive type in a programming language such as
Haskell. Nonetheless, we shall write formulas thus we need to be able
to disambiguate them. To achieve this goal we use standard
conventions: $\land$ has higher priority than $\vee$, unary modal
connectors
have higher priority than binary logical connectors. The least and
\GFP{s} operators yield priority instead, the dot notation emphasizes
this. For example, the formula $\mu_{x}.\phi \land \psi$ is implicitly
parenthesised as $\mu_{x}.(\phi \land \psi)$ instead of
$(\mu_{x}.\phi) \land \psi$.

  An $\Act$-\emph{model} (hereinafter referred to as model) is a
  triple $\M = \langle \UM,\Ras,v \rangle$ where: $\UM$ is a set (of
  worlds or states); for each $a \in \Act$,
  $R_{a} \subseteq \UM \times \UM$ is a (accessibility or transition)
  relation; $v : Prop\rto P(\UM)$ is a valuation, i.e., 
  an interpretation of the propositional variables as subsets of
  $\UM$.
Given a model ${\cal M}$, the semantics $\eval{\psi}$ of formulas
$\psi \in \Lmu$ as subsets of $\UM$ is recursively defined using the
standard clauses from \multimodal logic \K~(see e.g. \cite{gabbay2003many}).
For example, we have
\begin{align*}
  \eval{\pos[a]\psi}
  & \eqdef \set{s \in \UM \mid \exists s' \, (\,sR_{a}s' \ \, \&\ \,  s'\!\in \eval{\psi})}\,,
  \\  \eval{\nec[a]\psi}
& \eqdef \set{s \in \UM \mid \forall s' \,  (\,sR_{a}s' \,\Rightarrow \,  s' \in \eval{\psi}\,)}\,.
\end{align*}
\rephrase{Here we only define}{We present next} the semantics of the
least and greatest fixed-point constructors $\mu$ and $\nu$.  For this
purpose, given a subset $Z \subseteq \UM$, we define
$\variant{\pair{z}{Z}}$ to be the model that possibly differs from
$\M$ only on the value $Z$ that its valuation takes on $z$.  The
clauses for the fixed-point constructors are the following:
\begin{align*}
  \eval{\mu_{z}.\psi} & \eqdef \bigcap \set{Z \subseteq \UM \mid
    \eval[\variant{\pair{z}{Z}}]{\psi} \subseteq Z}\,,
  \\
  \eval{\nu_{z}.\psi} & \eqdef \bigcup \set{Z \subseteq \UM \mid Z
    \subseteq \eval[\variant{\pair{z}{Z}}]{\psi} }\,.
\end{align*}
A formula $\phi \in \Lmu$ and a variable $x \in Prop$ determine on
every model $\M$ the correspondence $\phiM^{x} : P(\UM) \rto P(\UM)$,
that sends each $S \subseteq \UM$ to
$\eval[\variant{\pair{x}{S}}]{\phi} \subseteq \UM$.  We shall write in
the following $\phiM$ for $\phiM^{x}$, when $x$ is understood.
Coming back to the clauses for the fixed-point constructors,
the syntactic restrictions on the variable $z$ in the productions of
$\mu_{z}.\psi$ and $\nu_{z}.\psi$ ($z$ must be positive in $\psi$)
imply that the function $\psiM^{z}$ is monotone.  By Tarski's theorem
\cite{tars:deci55}, the above clauses state that $\eval{\mu_{z}.\psi}$
and $\eval{\nu_{z}.\psi}$ are, respectively, the least and the
greatest fixed-point of $\psiM^{z}$.  As usual, we write
$\M,s \forces \psi$ to mean that $s \in \eval{\psi}$.

\subsection{The closure of a formula}
For $\phi \in \Lmu$, we denote by $Sub(\phi)$ the set of subformulas
of $\phi$.  A \emph{substitution} is an expression of the form
$[\psi_{1}/y_{1},\ldots ,\psi_{n}/y_{n}]$ where, for
$i =1\ldots ,n$, $y_{i}$ is a propositional variable and
  $\psi_{i} \in \Lmu$.  We use
  $\phi[\psi_{1}/y_{1},\ldots ,\psi_{n}/y_{n}]$ to denote
  \emph{application} of the substitution
  $[\psi_{1}/y_{1},\ldots ,\psi_{n}/y_{n}]$ to the formula
  $\phi$---that is, the result of simultaneously replacing every free
  occurrence of the variable $y_{i}$ in $\phi$ by the formula
  $\psi_{i}$, $i =1,\ldots ,n$.  As usual for formal systems with
  variable binders, we may assume that variable capture does not arise
  when applying substitutions to formulas.  When we want to emphasize
  application (of a substitution to a formula) we use a dot: for
  example, $\phi \cdot [\psi_{1}/y_{1},\ldots ,\psi_{n}/y_{n}]$ and
  $\phi[\psi_{1}/y_{1},\ldots ,\psi_{n}/y_{n}]$ denote the same
  formula.
  We also use the symbol $\cdot$ to denote composition of
  substitutions.
  For $\sigma_{1} \eqdef [\phi_{1}/x_{1},\ldots ,\phi_{n}/x_{n}]$ and
  $\sigma_{2} \eqdef [\psi_{1}/y_{1},\ldots ,\psi_{m}/y_{m}]$, the
  composite substitution $\sigma_{1} \cdot \sigma_{2}$ is defined by
\begin{align*}
  \sigma_{1} \cdot \sigma_{2} & \eqdef [\,\phi_{1} [\psi_{1}/y_{1},\ldots
  ,\psi_{m}/y_{m}]/x_{1}, 
  \ldots ,\phi_{n}[\psi_{1}/y_{1},\ldots
  ,\psi_{m}/y_{m}]/x_{n}\,]\,.
\end{align*}

\smallskip

A formula $\phi \in \Lmu$ is \emph{\wnamed} if no bound variable of
$\phi$ is also free in $\phi$ and, for each bound variable $z$ of
$\phi$, there is a unique subformula occurrence $\psi$ of $\phi$ of
the form $Q_{z}.\psi'$, with $Q \in \set{\mu,\nu}$.

It is well-known that every formula $\phi \in \Lmu$ is equivalent to a
\wnamed formula. 
We shall use \wnamed formulas only to have an accurate description of
the game  semantics, see \S~\ref{subsec:gameSemantics}.

\medskip

For $\phi \in \Lmu$ \wnamed and $\psi \in Sub(\phi)$, the
\emph{standard context} of $\psi$ in $\phi$ is
the 
composite substitution
\begin{align*}
 \sigmaphi_{\psi} & \eqdef   [Q^{n}_{z_{n}}.\psi_{n}/z_{n}] \cdot \myldots \cdot [Q^{1}_{z_{1}}.\psi_{1}/z_{1}]
\end{align*}
uniquely determined 
by the following conditions:
\begin{enumerate}
\item $\set{z_{1},\ldots ,z_{n}}$ is the set of variables that occur
  bound in $\phi$ and free in $\psi$,
\item for each $i = 1,\ldots ,n$, $Q^{i}_{z_{i}}.\psi_{i}$ is the
  unique subformula of $\phi$ such that $Q^{i} \in \set{\mu,\nu}$,
\item if $Q^{j}_{z_{j}}.\psi_{j}$ is a subformula of $\psi_{i}$, then
  $i < j$.
\end{enumerate}

The \emph{closure} of a \wnamed $\phi \in \Lmu$,
see \cite{Kozen83}, is the set $\CL(\phi)$ defined as follows:
\begin{align*}
  \CL(\phi) & \eqdef \set{ \psi \cdot \sigmaphi_{\psi} \mid \psi \in
    Sub(\phi)}\,.
\end{align*}
Recall from \cite{Kozen83} that $\CL(\phi)$ \rephrase{is}{can be
  characterised as} the least subset of $\Lmu$ such that
\begin{itemize}
\item $\phi \in \CL(\phi)$,
\item if $\psi_{1} @ \psi_{2} \in \CL(\phi)$, then
  $\psi_{1},\psi_{2} \in \CL(\phi)$, with $@ \in \set{\land,\vee}$,
\item if $\pos[a] \psi \in \CL(\phi)$ or $\nec[a] \psi \in \CL(\phi)$, then $\psi \in \CL(\phi)$,
\item if $Q_{z}.\psi \in \CL(\phi)$, then
  $\psi[Q_{z}.\psi/z] \in \CL(\phi)$, with $Q\in \set{\mu,\nu}$.
\end{itemize}
The definition of $\CL(\phi)$ implies it is finite.

\subsection{Game semantics}
\label{subsec:gameSemantics}
  
Given $\phi \in \Lmu$ \wnamed and a model
$\M = \langle \UM,\set{R_{a} \mid a \in \Act},v\rangle$, the game
$\G(\M,\phi)$ is the two player game of perfect information and
possibly infinite duration---a \emph{parity game}, see
e.g. \cite[Chapter 4]{AN01}---defined as follows.  Players of
$\G(\M,\phi)$ are named Eva and Adam.  The set of positions is the
Cartesian product $\UM \times \CL(\phi)$.
Moves are as in the table below:
$$
\begin{array}[t]{|c|c|}
  \hline
  \myrule\text{Adam's moves}
  &  \text{Eva's moves} \\
  \hline
  \begin{array}{r@{\hspace{1mm}}l}
    \myrule(s,\psi_{1} \land \psi_{2}) & \rto (s,\psi_{i})\,, \quad i = 1,2 \\
    \myrule(s,\nec[a] \psi) & \rto (s',\psi)\,, \quad s R_{a} s'\\
    \myrule(s,\nu_{z}.\psi) & \rto (s,\psi[\nu_{z}.\psi/z])
  \end{array}
  &
  \begin{array}{r@{\hspace{1mm}}l}
  \myrule(s,\psi_{1} \vee \psi_{2}) & \rto (s,\psi_{i}), \quad i = 1,2, \\
  \myrule(s,\pos[a] \psi) & \rto (s',\psi), \quad s R_{a} s', \\
  \myrule(s,\mu_{z}.\psi) & \rto (s,\psi[\mu_{z}.\psi/z])\,.
  \end{array} \\[30pt]  \hline
\end{array}
$$
From a position of the form $(s,\top)$ Adam loses, and from
a position of the form $(s,\bot)$ Eva loses.
Also, from a position of the form $(s,p)$ with $p$ a propositional
variable, Eva wins if and only if $s \in v(p)$; from a position of the
form $(s,\neg p)$ with $p$ a propositional variable, Eva wins if and
only if $s \not\in v(p)$.  The definition of the game is completed by
defining infinite winning plays. 
To achieve this goal, we choose a rank function
$\rho : \CL(\phi) \rto \N$ such that, when $\psi_{1}$ is a subformula
of $\psi_{2}$, then
$\rho(\psi_{1} \cdot \sigmaphi_{\psi_{1}}) \leq \rho(\psi_{2} \cdot
\sigmaphi_{\psi_{2}})$, and such that $\rho(\mu_{z}.\psi)$ is odd and
$\rho(\nu_{z}.\psi)$ is even. \correction{An infinite play}{The winner
  of an infinite play} $\set{(s_{n},\psi_{n}) \mid n \geq 0 }$ is
determined by the \emph{parity condition}: it is a win for Eva if and
only if
$\max \set{ n \geq 0 \mid \set{i \mid \rho^{-1}(\psi_{i}) \text{ is
      infinite} } } $ is even.

Let us recall the following fundamental
result (see for example~\cite[Theorem 6]{BraWal15}):
\begin{prop}
  For each model $\M$ and each \wnamed formula $\phi \in \Lmu$,
  $\M, s \forces \phi$ if and only if Eva has a winning strategy from
  position $(s,\phi)$ in the game $\G(\M,\phi)$.
\end{prop}

\subsection{Bisimulations}
Let $P \subseteq Prop$ be a subset of variables and let
$B \subseteq \Act$ be a subset of actions.  Let $\M$ and $\M'$ be two
models. A $(P,B)$-bisimulation is a relation
$\B \subseteq \UM \times \U{\M'}$ such that, for all $(x,x') \in \B$, we
have
\begin{itemize}
\item $x \in v(p)$ if and only if $x' \in v'(p)$, for all $p \in P$,
\item  for each $b \in B$,
  \begin{itemize}
  \item $xR_{b}y$ implies $x'R_{b}y'$ for some $y'$ such that
    $(y,y') \in \B$, 
  \item $x'R_{b}y'$ implies $xR_{b}y$ for some $y$ such that
    $(y,y') \in \B$.
  \end{itemize}

\end{itemize}
A pointed model is a pair $\tuple{\M,s}$ with
$\M = \langle \UM,\Ras, v\rangle$ a model and $s \in \UM$.  We say that
two pointed models $\tuple{\M,s}$ and $\tuple{\M',s'}$ are $(P,B)$-bisimilar if
there exists a $(P,B)$-bisimulation $\B \subseteq \UM \times \UM'$ with
$(s,s') \in \B$; we say that they are bisimilar if they are
$(Prop,\Act)$-bisimilar.

Let us denote by $\Lmu[P,B]$ the set of formulas whose free variables
are in $P$ and whose modalities are only indexed by actions in
$B$. The following statement is a straightforward refinement of
\cite[Theorem 10]{BraWal15}.
\begin{prop}
  \label{prop:bisimilar}
  If $\tuple{\M,s}$ and $\tuple{\M',s'}$ are $(P,B)$-bisimilar,
  then $\M,s \forces \phi$ if and only if $\M',s' \forces \phi$, for
  each $\phi \in \Lmu[P,B]$.
\end{prop}


\section{$\aleph_{1}$-continuous fragment of the modal $\mu$-calculus}
\label{sec:continuousfragment}

In this section we introduce a fragment of the modal $\mu$-calculus
which we name $\C(x)$.  Formulas in this fragment give rise to
\continuous[$\aleph_{1}$] maps when interpreted as monotone maps of
the variable $x$.  We show how to construct a formula
$\phi' \in \C(x)$ from a given arbitrary formula $\phi$ in order to
satisfy the following property: $\phi$ is \continuous for some
infinite regular cardinal $\kappa$
if and
only if $\phi$ and $\phi'$ are equivalent formulas. 
Our conclusions are twofold.  Firstly, we deduce the decidability of
the problem whether a formula is \continuous for some $\kappa$ is
decidable. Decidability relies on the effectiveness of the
construction and on the well-known fact that equivalence for the modal
$\mu$-calculus is elementary \cite{EmersonJutla91}. Secondly, we
observe that if a formula is \continuous, then it is already
\continuous[$\aleph_{1}$] or even \continuous[$\aleph_{0}$]. Thus,
there are no interesting notions of \kcontinuity for the modal
$\mu$-calculus besides those for the cardinals $\aleph_{0}$ and
$\aleph_{1}$.

\begin{defi}
  \label{def:kcontinuous}
  A formula \emph{$\phi\in \Lmu$ is \continuous in $x$} if $\phi_{\M}$
  is \continuous, for each model $\M$. If $X \subseteq Prop$, then we
  say that \emph{$\phi$ is \continuous in $X$} if $\phi$ is
  \continuous in $x$ for each $x \in X$.
\end{defi}

\begin{defi}
  We define $\C(X)$ to be the set of formulas of the modal
  $\mu$-calculus that can be generated by the following grammar:
  \begin{align}
    \label{grammar:C1}
    \phi & \eqdef x \mid \psi \mid \top \mid \bot \mid \phi \land \phi \
    \mid \phi \vee \phi \mid \pos[a] \phi \mid \mu_{z}.\chi \mid
    \nu_{z}.\chi\,,
  \end{align}
  where $x \in X$, $\psi \in \Lmu$ is a $\mu$-calculus formula not
  containing any variable $x \in X$, and
  $\chi \in \C(X \cup \set{z})$.
\end{defi}

If we omit the last production from the above grammar, we obtain a
grammar for the continuous fragment of the modal $\mu$-calculus, see
\cite{Fontaine08}, which we denote here by $\C[0](X)$.  For $i = 0,1$,
we shall write $\C[i](x)$ for $\C[i](\set{x})$. The main
\correction{achievement}{result} of \cite{Fontaine08} is 
that a formula $\phi \in \Lmu$ is \continuous[$\aleph_{0}$] in $x$ if
and only if it is equivalent to a formula in $\C[0](x)$.  It must be
observed that the fragment presented above is the same as the one
presented in \cite{FontaineVenema2018} under the name of finite width
fragment.

Let $X = \set{x_{1},\ldots ,x_{n}}$; a straightforward induction shows
that, for each $\phi \in \C(X)$, the map that sends a tuple
$(S_{1},\ldots ,S_{n}) \in P(\UM)^{n}$ to
$\eval[\variant{\pair{x_{1}}{S_{1}},\ldots
  ,\pair{x_{n}}{S_{n}}}]{\phi}$ belongs to the $\mu$-clone generated
by intersections, unions, the modal operators $\pos[a]_{\M}$ and the
constants $\eval{\psi}$. Since all these generating operations are
\continuous[$\aleph_{1}$] maps (actually, they are
\continuous[$\aleph_{0}$]) we can use Corollary~\ref{cor:muclone} to
derive the following statement.
\begin{prop}
  \label{prop:Conecontinuous}
  Every formula in the fragment $\C(X)$ is \continuous[$\aleph_{1}$]
  in $X$.
\end{prop}

\subsection{Syntactic considerations}
\begin{defi}
  The \emph{digraph $G(\phi)$} of a formula $\phi \in \Lmu$ is
  obtained from the syntax tree of $\phi$ by adding an edge from each
  occurrence of a bound variable to its binding fixed-point
  quantifier. The root of $G(\phi)$ is $\phi$.
\end{defi}

\begin{defi}
  A path in $G(\phi)$ is \emph{\bad} if one of its nodes corresponds to a subformula
  occurrence of the form $\nec[a]\psi$. A \bad cycle in $G(\phi)$ is a
  bad path starting and ending at the same vertex.
\end{defi}
Recall that a path in a digraph is simple if it does not visit twice
the same vertex. The rooted digraph $G(\phi)$ is a tree with
back-edges; in particular, it has the following property: for every node, there
exists a unique simple path from the root to this node.

\begin{defi}
  We say that an occurrence of a free variable $x$
  of $\phi$ is
  \begin{enumerate} 
  \item \emph{\bad} if there is a \bad path in $G(\phi)$ from the root
    to it;  
  \item \emph{\nsbad} (or \emph{\Boxed}) if the unique simple path
    in $G(\phi)$ from the root to it is \bad;
  \item \emph{\vbad} if it is \bad and not \Boxed.
  \end{enumerate}
\end{defi}

\begin{exa}
  Figure~\ref{fig:digraph} represents the digraph of the formula
  $$ (\mu_{z_1}. y_0 \land (\nu_{z_0}.z_0\land \nec {z_1}))\lor
  ( \pos
  {y_0} \land y_1)\,.
  $$
  From the figure we observe that:
  \begin{itemize}
  \item[{\tiny $\bullet$}] The free occurrence of $z_1$ in the digraph of
    $\nu_{z_0}.z_0\land {\nec {z_1}}$ (in dashed) is bad but
    \nsbad.
    
  \item[{\tiny $\bullet$}] The free occurrence of $y_0$ in the left branch of the
    digraph (in bold) is \vbad. The other occurrence of $y_0$ is not
    bad.
    
  \item[{\tiny $\bullet$}]
    The unique free occurrence of $y_1$ in $\phi$ is not bad.
\end{itemize}
\begin{figure}[h]
  \begin{center}
    \includegraphics[resolution=300]{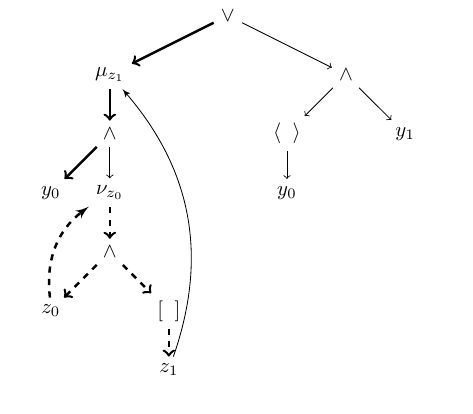}
    \caption{The digraph of a formula in $\Lmu$.}
    \label{fig:digraph}
  \end{center}
\end{figure}

\end{exa}

\begin{lem}
  \label{lemma:ConeSyntaxGraph}
For every set $X$ of variables and every $\phi\in \Lmu$, the following are equivalent:
  \begin{enumerate}
  \item  $\phi \in \C(X)$,
  \item no occurrence of a variable $x \in X$ is \bad in $\phi$.
  \end{enumerate}
\end{lem}

\begin{proof}
  Let $X$ be a set of variables and $\phi \in \Lmu$.
  
  (1) implies (2). The proof is by induction on the structure of
    formulas.
    Consider a formula $\phi \in \C(X)$ and observe that the only way
    to introduce a \bad path from the root of $G(\phi)$ to an
    occurrence of some variable $x \in X$ is either by using a modal
    operator $\nec[a]$---which, however, is excluded by the grammar
    defining the fragment $\C(X)$---or by a fixed-point formation
    rule.  Therefore, we focus on the case where $\phi$ is of the form
    $Q_{z}.\chi$, for $Q\in \{\mu,\nu\}$ and
    $\chi \in \C(X \cup \set{z})$, inductively assuming that no
    occurrence of a variable $x \in X \cup \set{z}$ is \bad in $\chi$.
    Suppose that there is an occurrence of a variable $x \in X$ and a
    \bad path from the root of $G(Q_{z}.\chi)$ to this occurrence.
    Since this occurrence of $x$ is not \bad in $\chi$, this path
    necessarily crosses an edge from an occurrence of the variable $z$
    to the root of $G(Q_{z}.\chi)$. But then this occurrence of $z$ is
    \bad in $G(\chi)$, contradicting the inductive
    hypothesis.

  (2) implies (1).  Suppose there exist pairs of the form
    $(X,\phi)$ where $X$ is a finite set of variables,
    $\phi \not\in \C(X)$ and, for each $x \in X$, $\phi$ has no \bad
    occurrence of $x$.  Among these pairs, consider $(X,\phi)$ with
    $\phi$ of least complexity, where we define the complexity of a
    formula $\phi$ as the number of vertices in $G(\phi)$. Clearly,
    $\phi$ has to be of the form $Q_{z}.\chi$.
    Moreover, by the second
    production of the grammar~\eqref{grammar:C1}, it must contain a
    free occurrence of a variable $x \in X$. 
    Observe that $\chi$ has no bad occurrence of any $x \in X$, since
    such a \bad occurrence yields a bad occurrence of $x$ in
    $Q_{z}.\chi$. Also, if an occurrence of $z$ is \bad in $\chi$,
    then any occurrence of some $x \in X$ is bad in
    $Q_{z}.\chi$. Therefore, $\chi$ has no \bad occurrence of any
    variable in $X \cup \set{z}$. By the minimality assumption on
    $(X,\phi)$, $\chi$ belongs to $\C(X \cup \set{z})$ and so
    $\phi \in \C(X)$, a contradiction.
\end{proof}

\subsection{The $\C(x)$-flattening of formulas}
We aim at defining the $\C(x)$-flattening 
 $\phi^{\flat x}$   of any formula
$\phi$ of the modal $\mu$-calculus.
This will go through the definition of the intermediate formula
$\lphi$ which has one more new free variable $\x$.  The formula
$\lphi$ is obtained from $\phi$ by renaming to $\x$ all the \Boxed
occurrences of the variable $x$.
In the definition of $\liftNp{\phi}$ below, we assume that $x$ has no
bound occurrences in $\phi$.
The formal definition is given by induction as follows:
\begin{align*}
  \liftNp{y} & = y &
  \lift{\neg y} & = \neg y \\
  \liftNp{\top} & = \top &\liftNp{\bot} & = \bot \\
  \lift{\psi_{0} @ \psi_{1}} & = \liftNp{\psi_{0}}  @\,
  \liftNp{\psi_{1}} \spacelesstext{\quad with $@\in \set{\land, \lor}$,} \\
  \lift{\pos[a]\psi} & = \pos[a] \liftNp{\psi}  & \lift{\nec[a]\psi} & 
  = \nec[a] \psi [\x/x] & &\\
   \lift{Q_{z}.\psi} & = Q_{z}.\liftNp{\psi} \spacelesstext{\qquad with $Q\in \set{\mu, \nu}$.}
\end{align*}
 The following fact is proved by a
straightforward induction.
\begin{lem}
  \label{lemma:phifromlift}
  For each $\phi\in \Lmu$, we have 
 \begin{align}
   \label{eq:phifromlift}
   \lphi \cdot \subst = \phi \,.
 \end{align}
\end{lem}

The $\C(x)$-flattening $\phi^{\flat x}$ of formula $\phi \in \Lmu$  is then
defined by:
\begin{align*}
 \phi^{\flat x}& \eqdef \lphi \cdot [\bot/\x]\,
\end{align*}
and henceforward we shorten it to $\fphi$.

Let us notice that $\lphi$ (or $\fphi$) does not in general belong to
$\C(x)$. For example,
$ \flatten{\mu_{z}.x \vee \nec[a] z} = \mu_{z}.x \vee \nec[a] z
\not\in \C(x)$ since $x \vee \nec[a] z \not\in \C(\{x,z\})$. Yet,
the following definition and lemma partially justify the choice of
naming.
\begin{defi}
  A formula $\phi$ is \emph{\good} w.r.t. a set $X$ of variables if no
  occurrence of a variable $x \in X$ is \vbad. A formula $\phi$ is
  \emph{\good} if it is \good w.r.t. $\set{x}$.
\end{defi}

\begin{rem}
  \label{rem:presagood}
  Let $\psi$ be a \wnamed variant of a formula $\phi$, so $\psi$ is
  obtained from $\phi$ by renaming some bound variables. The digraphs
  $G(\psi)$ and $G(\phi)$ differ only for the labelling of some pairs
  of nodes lying on a back edge from an occurrence of a bound variable
  to its binding fixed-point quantifier.  Now let $P$ be a property of
  formulas defined by means of the digraphs $G(\phi)$ without
  mentioning the labels of nodes on any of those back-edges. Then a
  formula $\phi$ has the property $P$ if and only if any of its
  \wnamed variant has the property $P$.  One such $P$ is the property
  of being \agood. Therefore, if $\phi$ is \agood, then so it is any
  of its \wnamed variants.
\end{rem}

\begin{lem}
  \label{lemma:almostGoodOK}
  If $\phi$ is an \good formula, then both $\lphi$ and $\fphi$ belong
  to $\C(x)$.
\end{lem}
\begin{proof}
  We prove the result for $\lphi$.  Consider a \bad occurrence of $x$
  in $\lphi$. After substituting $\x$ for $x$, such an occurrence
  yields a bad occurrence of $x$ in $\phi$. Since there are no \vbad
  occurrences of $x$ in $\phi$, then this occurrence should be \nsbad,
  that is, under the scope of a necessity modal operator $\nec[a]$.
  But then this same occurrence of $x$ in $\phi$ would correspond to
  an occurrence of $\x$ in $\lphi$ and not to an occurrence of $x$ as
  assumed.
\end{proof}

We aim 
to transform a formula $\phi$ into an equivalent formula in which
there are no \vbad occurrences of the variable $x$. The transformation
that we define next achieves this goal.  For $\phi \in \Lmu$ and a
finite set $X$ of variables not bound in $\phi$, we define a formula
$\boxingNp[X]{\psi}$, with all the occurrences of a bad variable
$x \in X$ \Boxed (aka \nsbad).  
We let
\begin{align*}
  \boxingNp[X]{\psi} & \eqdef \psi\,,
  \quad\text{if no occurrence of a variable $x \in X$ is
    \vbad in $\psi$,}
  \intertext{and, otherwise,}
  \boxing[X]{\pos[a] \psi} & \eqdef \pos[a] \boxing[X]{\psi}\,, \\
  \boxing[X]{\psi_{1} \loperator \psi_{2}} & \eqdef \boxing[X]{\psi_{1}}
  \loperator \boxing[X]{\psi_{2}}\,,
  \quad\text{with $ \loperator \in \set{\land,\lor}$,} \\
  \boxing[X]{Q_{z}.\psi} & \eqdef \psi_{0}[\psi_{1}/\boxed{z}]\,,
  \spacelesstext{\quad where}
  \\
  & \psi_{0} \eqdef Q_{z}.\psi_{2},\; \psi_{2} \eqdef \lift[z]{\boxingNp[X \cup
    \sset{z}]{\psi}}\,,
    \tand
    \psi_{1} \eqdef
    Q_{\boxed{z}}.\psi_{0}\,,
\end{align*}
with $Q \in \set{\mu,\nu}$.
That is, in
the last clause, $\psi_{2}$ is obtained from
$\boxingNp[X \cup \sset{z}]{\psi}$ by renaming all the \Boxed
occurrences of $z$ to $\boxed{z}$.
  A key point of the definition of $\boxing[X]{Q_{z}.\psi}$ is that,
  when we split, with $\psi_{2}$, the fixed-point variable $z$ into
  its \Boxed/un\Boxed parts, we also split, with $\psi_{1}$ and
  $\psi_{0}$, the respective fixed-point bindings, see
  Figure~\ref{fig:proofOfLemma25}. 
Observe that the first defining clause implies that
\begin{align*}
    \boxingNp[X]{x} & = x\ \text{if $x \in X$,}\\
    \boxingNp[X]{\psi} & = \psi\ \text{if $\psi$ contains no
      variable
      $x \in X$,} 
    \\
    \boxing[X]{\nec[a]\psi} & = \nec[a] \psi\,.
  \end{align*}
  \begin{exa}
    Consider the formula
    $\psi \eqdef x\lor \mu_{z}.\,x\lor z\lor \nec[a] (x\land z)$,
    where only the second occurrence of $x$ is \vbad.  For
    $X=\set{x}$ we have
    \begin{align*}
      \boxingNp[X]{\psi}= x\lor \mu_{z}.x\lor z\lor \nec[a] (x\land
      \mu_{\boxed{z}}.\mu_{z}.x\lor z\lor \nec[a] (x\land
      \boxed{z}))
    \end{align*}
    where no occurrence of $x$ is \vbad and so the formula
    $\boxingNp[X]{\psi}$ is \agood.
     \end{exa}

  \begin{prop}
    \label{prop:boxphieqphi}
    \label{prop:boxphialmostgood}
    The formula $\boxingNp[X]{\phi}$ is \good w.r.t. $X$ and it is
    equivalent to the formula $\phi$.
  \end{prop}
  We split the proof of the proposition in two lemmas.
  \begin{lem}
    The formula $\boxingNp[X]{\phi}$ is
    equivalent to $\phi$.
  \end{lem}
  \begin{proof}
    The statement of the proposition is obvious if a formula matches
    the base case of the definition. Also, in the cases of a modal
    formula $\pos[a]\psi$ and of a formula
    $\psi_{1} \loperator \psi_{2}$ with
    $\loperator \in \set{\land,\lor}$, the statement is an immediate
    consequence of the inductive hypothesis.
    In case of a formula of the form $\boxing[X]{Q_{z}.\psi}$ with
    $Q \in \set{\mu,\nu}$, we argue as follows:
    \begin{align*}
      \boxing[X]{Q_{z}.\psi}
      =  \psi_{0}[Q_{\boxed{z}}.\psi_{0}/\boxed{z}] 
      &\equiv Q_{\boxed{z}}.\psi_{0}\,, \tag*{by the fixed-point
        equation,}  \\
      = Q_{\boxed{z}}.Q_{z}.\psi_{2} & \equiv
      Q_{z}.\psi_{2}[z/\boxed{z}]\,,\tag*{by the equational properties
        of
        fixed-points,}  \\
      = Q_{z}.(\lift[z]{\boxingNp[X \cup
        \sset{z}]{\psi}}[z/\boxed{z}]) & = Q_{z}.(\boxingNp[X \cup
      \sset{z}]{\psi})\,, \tag*{by equation~\eqref{eq:phifromlift},}
      \\
      & \equiv Q_{z}.\psi\,, \tag*{by the inductive hypothesis.}
    \end{align*}
  \end{proof}

  \begin{lem}
    \label{lemma:boxedX}
    The formula $\boxingNp[X]{\phi}$ is \good, that is, it has no \vbad
    occurrence of a variable $x \in X$.
  \end{lem}
  Figure~\ref{fig:proofOfLemma25} illustrates the proof of
  this lemma.
  \begin{proof}
    The statement of the proposition is obvious if a formula matches
    the base case of the definition. Also, in the cases of a modal
    formula $\pos[a]\psi$ and of a formula
    $\psi_{1} \loperator \psi_{2}$ with
    $\loperator \in \set{\land,\lor}$, the statement is an immediate
    consequence of the inductive hypothesis.
    The only non-trivial case is that of a formula of the form
    $\boxing[X]{Q_{z}.\psi}$ with $Q \in \set{\mu,\nu}$.

    Let us firstly recall that $\boxing[X]{Q_{z}.\psi}$ is of the form
    $\psi_{0}[Q_{\boxed{z}}.\psi_{0}/\boxed{z}]$ with
    $\psi_{0} = Q_{z}.\psi_{2}$ and
    $\psi_{2} = \lift[z]{\boxingNp[X \cup \sset{z}]{\psi}}$.  Also,
    for the sake of readability, we have let
    $\psi_{1} \eqdef Q_{\boxed{z}}.\psi_{0}$ in the definition, so
    $\boxing[X]{Q_{z}.\psi} = \psi_{0}[\psi_{1}/\boxed{z}]$.  In
    particular, every occurrence of a variable $x \in X$ is located
    within $\psi_{0}$, or it is located in some subtree of
    $\psi_{0}[\psi_{1}/\boxed{z}]$ rooted at some occurrence of the
    subformula $\psi_{1}$.

    We argue next that every occurrence of a variable $x \in X$ within
    $\psi_{0} = Q_{z}.\psi_{2}$ is not \vbad.  By the induction
    hypothesis, such an occurrence of $x$ is not \vbad within
    $\psi_{2}$; the only reason for becoming \vbad in $\psi_{0}$ is
    then the existence of a cycle going through an edge from some
    occurrence of the variable $z$ to the formula
    $Q_{z}.\psi_{2}$. Such a bad cycle can arise for two reasons:
    either (a) there is a necessity modal operator $\nec[a]$ from
    $\psi_{2}$ to this occurrence of $z$, or (b) there is a bad cycle
    in some subformula of $\psi_{2}$ of the form $Q_{w}.\chi$, with
    this subformula lying on the path from $\psi_{2}$ to the
    occurrence of $z$.
    Yet (a) is not possible: recall that
    $\psi_{2} = \lift[z]{\boxingNp[X \cup \sset{z}]{\psi}}$, thus all
    the occurrences of $z$ within $\psi_{2}$ are not \Boxed (such
    an occurrence in $\boxingNp[X \cup \sset{z}]{\psi}$ has been
    renamed to $\boxed{z}$ in $\psi_{2}$).
    Also (b) is not possible, since otherwise the occurrence of $z$ in
    $\psi_{2}$ is \vbad. Yet we know that the same occurrence of $z$
    is not \vbad in $\boxingNp[X \cup \set{z}]{\psi}$, and renaming
    the \Boxed occurrences of $z$ to $\boxed{z}$ in this formula
    cannot transform another occurrence of $z$ into a \vbad occurrence. 
    
    Finally, we argue that there is no \vbad occurrence of some
    variable $x \in X$ in $\psi_{0}[\psi_{1}/\boxed{z}]$.  Suppose
    there is such an occurrence of $x$. If this occurrence is located
    within $\psi_{0}$, then this would also be a bad occurrence for
    $\psi_{0}$, which we have excluded. Thus, such an occurrence is
    located within some occurrence of the subformula $\psi_{1}$. But
    since every occurrence of the variable $\boxed{z}$ within
    $\psi_{0}$ is \Boxed, all the variable occurrences of $x$ within
    $\psi_{1}$ become \Boxed in the formula
    $\psi_{0}[\psi_{1}/\boxed{z}]$.

    Therefore, no occurrence of $x \in X$ is \vbad in $\psi_{0}[\psi_{1}/\boxed{z}]$.
\end{proof}

\newcommand{\psiZ}[1]{
  \node at (0,4) {$Q.z$} ;
  \draw[dashed] (0,3.6) -- (0,3) ;
  \draw (0,3) -- (-1.5,0) -- (1.5,0) -- (0,3);
  \node[fill=white,inner sep=0.5pt] (x) at (-1.1,0) {$x$} ;
  \node[fill=white,inner sep=0.5pt] (zbar) at (-0.2,0) {#1} ;
  \node[fill=white,inner sep=0.5pt] (z) at (1.0,0) {$z$} ;
  \draw [->, dashed] (z)++(0,-0.2) .. controls (2,-1) and (2,1).. (0.4,3.8);
  \draw[dashed] (0,3) -- (0,2) -- (z);
  \draw[dashed] (0,2) -- (zbar);
  \draw[dashed] (0,2.5) -- (x);
  \node[fill=white,inner sep=0.5pt,scale=0.5] at (-0.1,1) {$\nec[a]$};
  \draw[decorate, decoration={mirror}, thick]  (z)++(1,-0.2) -- ++(0,3.3);
  \node at (2.8,1.5) {$\psi_{2}$};
}
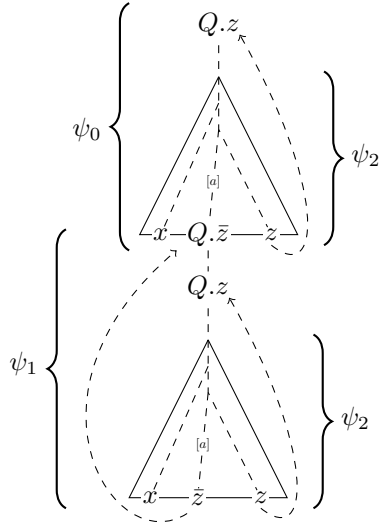
\begin{figure}[h]
  \centering
  \begin{tikzpicture}[xscale=0.7,yscale=.7, decoration={brace, amplitude=7pt}]
   \clip(-4.0,-5.5) rectangle (3.1,4.7);
    \psiZ{$Q.\bar{z}$}
    \coordinate (Qz) at (zbar) ;
    \draw[dashed] (Qz)+(0,-0.3) -- +(0,-0.7) ;
    \coordinate (newbase) at ++(-0.2,-5);
    \begin{scope}[shift={(newbase)}]
      \psiZ{$\bar{z}$}
    \end{scope}
    (Qz) \coordinate (goal) at ++(-0.8,-0.3);
    \draw [->, dashed] (zbar)+(-0.2,-0.3) .. controls (-2,-6.2)  and (-4,-3)
    .. (goal) ;
    \draw[thick, decorate] (newbase)++(-2.7,-0.2) -- ++(0,5.3) ;
    \node at (-3.7,-2.45) {$\psi_{1}$};
    \draw[thick, decorate] (-1.7,-0.3) -- (-1.7,4.4);
    \node at (-2.5,2) {$\psi_{0}$};
  \end{tikzpicture}
  \caption{Illustration of the proof of Lemma~\ref{lemma:boxedX}}
  \label{fig:proofOfLemma25}
\end{figure}

We can finally state our first main result.
\begin{thm}
  \label{cor:flatinC}
  \label{thm:flatinC}
  Every formula $\phi$ is equivalent to a formula $\psi$ with
  $\liftNp{\psi}$ and $\flattenNp{\psi}$ in $\C(x)$.  Moreover,
    we can choose $\psi$ \wnamed.
\end{thm}
In the theorem we can take $\psi$ to be a \wnamed variant of the
\agood formula $\boxingNp[\set{x}]{\phi}$. Then, by
Remark~\ref{rem:presagood}, $\psi$ is \agood and therefore, by
Lemma~\ref{lemma:almostGoodOK}, $\liftNp{\psi}$ and $\flattenNp{\psi}$
belong to $\C(x)$.

\subsection{Comparing the closures of $\phi$ and $\fphi$}

In the following, let $\phi$ be a \wnamed formula. Observe that both
$\liftNp{\phi}$ and $\flattenNp{\phi}$ are also \wnamed---we verify this
for $\liftNp{\phi}$, the argument for $\flattenNp{\phi}$ is similar.
Indeed, $\liftNp{\phi}$ has the same bound variables as $\phi$ and
therefore $\x$, assumed to be new, cannot be bound in
$\liftNp{\phi}$. Next, if $z$ is bound in $\liftNp{\phi}$, then it is
bound in $\phi$ and there is a unique subformula occurrence of $\phi$
of the form $Q_{z}.\psi$ and therefore a unique subformula occurrence
of $\liftNp{\phi}$ of the form $Q_{z}.\psi'$, the latter being either
$Q_{z}.\liftNp{\psi}$ or $Q_{z}.\psi[\x/x]$.

We develop here some syntactic considerations that allow
us to relate the closures of $\phi$ and $\fphi$. In turn, that will
make it possible to relate the positions of the games $\G(\M,\phi)$
and $\G(\M,\fphi)$, and so to construct, in the proof of
Proposition~\ref{prop:maincontinuous}, a winning strategy in the
latter game from a winning strategy in the former.

Recall that 
we use $Sub(\phi)$ for the set of subformulas of $\phi$.
\begin{rem}
  If $\x$ and $y$ are distinct variables and 
  $\chi$ is a formula that does not contain the variable $y$, then
  \begin{align}
    \label{eq:halfcomm}
    [\psi/y] \cdot [\chi/\x] & = [\chi/\x]\cdot [\psi[\chi/\x]/y]\,.
  \end{align}
  Also, if $\x$ is a variable occurring free in $\phi$ and $\Const$ is
  either a variable or a constant, then
  $Sub(\phi\cdot[\Const/\x])  = \set{\psi\cdot[\Const/\x]\mid \psi
      \in Sub(\phi)}$.
\end{rem}
The above remark is easily justified considering that for terms $t,s$
over an arbitrary signature we have
$Sub(t[s/\x]) = \set{t'[s/\x] \mid t' \in Sub(t)} \cup Sub(s)$,
whenever $\x$ is a variable occurring free in $t$, where now $Sub(t)$
denotes the set of subterms of $t$.

\begin{lem}
  \label{lemma:closuresubst}
  If $\x$ is a free variable of $\phi$ and $\Const$ is either a
  variable not bound in $\phi$ or a constant, then
  \begin{align*}
    \CL(\phi\cdot[\Const/\x]) & = \set{\psi\cdot[\Const/\x]\mid \psi \in
      \CL(\phi)}\,.
  \intertext{In particular, we have}
    \CL(\phi) & = \set{\phi'\cdot \subst \mid \phi' \in \CL(\lphi)
    }\,, 
    &
    \CL(\fphi) & = \set{\phi'\cdot [\bot/\x] \mid \phi' \in
      \CL(\lphi) }\,.
  \end{align*}
\end{lem}
The second statement of the lemma is an immediate consequence of the
first, considering that $\phi = \lphi \cdot \subst$ and
$\fphi = \lphi\cdot [\bot/\x]$.
\begin{proof}
  By repeatedly using equation~\eqref{eq:halfcomm} with
  $\chi = \Const$, we have
  \begin{align*}
    {\sigma^{\phi}_{\psi} \cdot\substp} 
    & = [Q_{n}y_{n}.\psi_{n}/y_{n}]\cdot \myldots
  \cdot
    [Q_{1}y_{1}.\psi_{1}/y_{1}] \cdot\substp \\
    & =
    \substp \cdot [Q_{n}y_{n}.\psi_{n}\cdot\substp/y_{n}] \cdot \myldots
    \cdot
    [Q_{1}y_{1}.\psi_{1}\cdot\substp/y_{1}] \,.
  \end{align*}
  Inspection of the three properties defining the standard context
  $\sigmaphi_{\psi}$ shows that the equality
  \begin{align*}
    \sigmaphis_{\psi\cdot\substp} & =
    [Q_{n}y_{n}.\psi_{n}\cdot\substp/y_{n}]
    \cdot \myldots
    \cdot
    [Q_{1}y_{1}.\psi_{1}\cdot\substp/y_{1}] 
  \end{align*}
  holds.  From this we deduce
  \begin{align}
    \label{eq:subststandardcontext}
    (\psi \cdot \substp) \cdot  \sigmaphis_{\psi\cdot\substp}
    & = (\psi \cdot \sigmaphi_{\psi}) \cdot \substp\,.
  \end{align}
  \begin{align*}
    \mbox{Thus }{\phi' \in \CL(\phi \cdot \substp)}  &\tiff \phi' = \psi \cdot
    \sigmaphis_{\psi} \text{ for some $\psi
      \in Sub(\phi \cdot \substp)$} \\
    & \tiff \phi' = \psi \cdot
    \substp \cdot \sigmaphis_{\psi \cdot \substp} \text{ for some
      $\psi
      \in Sub(\phi)$} \\
    & \tiff \phi' = \psi  \cdot \sigmaphi_{\psi}\cdot
    \substp  \text{ for some
      $\psi
      \in Sub(\phi)$} \\
    & \tiff \phi' = \phi''\cdot
    \substp  \text{ for some
      $\phi''
      \in \CL(\phi)$.}
  \end{align*}
  \ENDOFPROOF[Lemma~\ref{lemma:closuresubst}]
\end{proof}

\subsection{The continuous fragments}
Now we aim to prove
some sort of converse of Proposition~\ref{prop:Conecontinuous}, namely
that every \continuous formula $\phi$ of the propositional modal
$\mu$-calculus is equivalent to $\fphi$, where $\kappa$ is still
assumed to be an infinite regular cardinal.

A pointed model $\langle\M,s\rangle$ is a \emph{tree model} if the
rooted digraph $\langle \UM,\bigcup_{a \in \Act} R_{a},s\rangle$ is a
tree.  Let $\kappa$ be a cardinal. A tree model $\tuple{\M,s}$ is
\emph{\expanded} if, for each $a \in \Act$, whenever $x R_{a} x'$,
there are at least $\kappa$ $a$-successors of $x$ that are bisimilar
to $x'$. The following lemma is straightforward, see e.g.
\cite[Proposition 1]{Fontaine08} for the case where
$\kappa = \aleph_{0}$.
\begin{lem}
  \label{lemma:expanded}
  For each pointed model $\tuple{\M,s}$ there exists a \expanded tree
  model $\tuple{\T,t}$ bisimilar to $\tuple{\M,s}$.
\end{lem}

\begin{prop}
  \label{prop:maincontinuous}
  If $\M,s \forces \phi$ and $\phi$ is \continuous in $x$, then
  $\M,s \forces \fphi$.
\end{prop}
\begin{proof}
  Suppose that $\M= (\UM,\set{R_{a} \mid a \in A},v)$ is a model and
  that $s_{0} \forces \phi$. We want to prove that
  $s_{0} \forces \fphi$.  Notice first that, by
  Lemma~\ref{lemma:expanded}, we can assume that $\tuple{\M,s_{0}}$ is
  a \expanded tree model.

  Since $\phi$ is \continuous in $x$ and $s_0\in\phi_{\M}(v(x))$,
  there exists $U\subseteq v(x)$, with cardinality of $U$ strictly
  smaller than $\kappa$, such that $s_0\in\phi_{\M}(U)$, so
  $\M[\pair{x}{U}],s_{0} \forces \phi$.  We shall argue that
  $\M[\pair{x}{U}],s_{0} \forces \fphi$, from which it follows that
  $s_{0} \in \fphi_{\M}(U) \subseteq \fphi_{\M}(v(x))$---since
  $\fphi_{\M}$ is monotone---thus $\M,s_{0} \forces \fphi$.

  In the following let $\NN = \M[\pair{x}{U}]$ (notice that $\NN$ is not
  anymore \expanded).
  Since $\NN,s_{0} \forces \phi$, let us fix a winning strategy for
  Eva in the game $\G(\NN,\phi)$ from position $(s_{0},\phi)$. We
  define next a strategy for Eva in the game $\G(\NN,\fphi)$ from
  position $(s_{0},\fphi)$.  Observe first that, by
  Lemma~\ref{lemma:closuresubst}, positions in $\G(\NN,\phi)$
  (respectively, $\G(\NN,\fphi)$) are of the form $(s,\psi[x/\x])$
  (resp., $(s,\psi[\bot/\x])$) for a formula $\psi \in \CL(\lphi)$.
  Therefore, at the beginning of the play, Eva plays in
  $\G(\NN,\fphi)$ simulating the moves of the given winning strategy
  for the game $\G(\NN,\phi)$. The simulation goes on until the play
  reaches a pair of positions
  $p \eqdef (s,\nec[a]\chi \sigma^{\lphi}_{\nec[a]\chi}\cdot [x/\x])$
  and
  $p' \eqdef (s,\nec[a]\chi \sigma^{\lphi}_{\nec[a]\chi}\cdot
  [\bot/\x])$, for some subformula $\nec[a]\chi$ of $\lphi$, where
  $\chi = \chi'[\x/x]$ for some subformula $\chi'$ of $\phi$.
  \begin{clm}
    The positions $p$ and $p'$ are respectively of the form
    $(s,\nec[a] \psi) \in \G(\NN,\phi)$ and
    $(s,\nec[a] \psi') \in \G(\NN,\fphi)$ for some $\psi$ and $\psi'$
    such that $\psi[\bot/x] \impl \psi'$ is a tautology.
  \end{clm}
  \begin{proofofclaimNoQed}
    In the computations that follows we use the notation
    $\phi \geq \phi'$ (for $\phi,\phi'\in\Lmu$) to mean that
    $\eval{\phi} \supseteq \eval{\phi'}$ for every $\M$ (i.e.,
    $\phi' \impl \phi$ is a tautology).
    
    We let $\psi \eqdef \chi \sigma^{\lphi}_{\chi}\cdot [x/\x]$ and
    observe that
    \begin{align*}
      \psi & = \chi \sigma^{\lphi}_{\chi}\cdot [x/\x] 
      = \chi' [\x/x] \cdot \sigma^{\lphi}_{\chi' [\x/x]}\cdot [x/\x]\\
    & = \chi' [\x/x] \cdot [x/\x] \cdot \sigma^{\lphi \cdot
      [x/\x]}_{\chi' [\x/x]\cdot [x/\x]}\,, \tag*{by
      equation~\eqref{eq:subststandardcontext},}
    \\
    & = \chi' \cdot \sigmaphi_{\chi'} \,,
  \end{align*}

  On the other hand, we let $\psi' \eqdef \chi \sigma^{\lphi}_{\chi}\cdot
  [\bot/\x]$, so that
  \begin{align*}
    \psi' & = \chi \sigma^{\lphi}_{\chi}\cdot [\bot/\x] 
    = \chi' [\x/x] \cdot \sigma^{\lphi}_{\chi' [\x/x]}\cdot
    [\bot/\x] 
    = \chi' [\x/x] \cdot [\bot/\x] \cdot \sigma^{\lphi \cdot
      [\bot/\x]}_{\chi' [\x/x]\cdot [\bot/\x]}
    \\
    & = \chi' [\bot/x] \cdot \sigma^{\lphi \cdot [\bot/\x]}_{\chi'
      [\bot/x]}\,, \tag*{since
      $\chi'$ does not contain the variable $\x$,}
    \\
    & \geq \chi' [\bot/x] \cdot \sigma^{\lphi \cdot
      [\bot/\x,\bot/x]}_{\chi' [\bot/x]} \,,\tag*{since
      $[\bot/\x] \geq [\bot/\x,\bot/x]$ and
      $\lphi$ is monotone in $x$ and $\x$,}
    \\
 & = \chi'
    [\bot/x] \cdot \sigma^{\lphi \cdot [x/\x]\cdot[\bot/x]}_{\chi'
      [\bot/x]} = \chi' [\bot/x] \cdot \sigma^{\phi [\bot/x]}_{\chi'
      [\bot/x]} 
    \\
    & = \chi' \cdot \sigmaphi_{\chi'} \cdot [\bot/x] = \psi
    [\bot/x]\,, \tag*{by the previous computations. \qedClaim}
  \end{align*}
\end{proofofclaimNoQed}

Thus, Eva needs to continue playing in the
game $\G(\NN,\fphi)$ from
  a position
  of the form $(s,\nec[a] \psi')$ where $\psi[\bot/x] \impl \psi'$ is
  a tautology.
  We construct a winning stategy for Eva from this position as
  follows. Since the play has reached the position $(s,\nec[a]\psi)$
  of $\G(\NN,\phi)$ we also know that $s\in \evalN{\nec[a]\psi}$. We
  argue then that $s\in \evalN{\nec[a]\psi}$ implies
  $s\in \evalN{\nec[a]\psi[\bot/x]}$. Since
  $\evalN{\nec[a]\psi[\bot/x]}\subseteq \evalN{\nec[a]\psi'}$, Eva
  also has a winning strategy from position $(s,\nec[a]\psi')$ of the
  game $\G(\NN,\fphi)$, which she shall use to continue the play.

  \begin{clm}
    $s\in \evalN{\nec[a]\psi}$ implies
    $s\in \evalN{\nec[a]\psi[\bot/x]}$.
  \end{clm}
  \begin{proofofclaim}
    The statement of the claim trivially holds if $s$ has no
    successors.  Let $s'$ be a fixed $a$-successor of $s$ (i.e.
    $s R_{a} s'$), so $\NN,s' \forces \psi$; we want to show that
    $\NN,s' \forces \psi[\bot/x]$.
    To this goal, recalling that
    $\psi[\bot/x]\in \Lmu[Prop\setminus\{x\},\Act]$ and using
    Proposition~\ref{prop:bisimilar}, it is enough to prove that
    $\tuple{\NN,s'}$ is $(Prop\setminus\{x\},\Act)$-bisimilar to some
    $\tuple{\NN,s''}$ such that $\NN,s'' \forces \psi[\bot/x]$.
    
    Let  $S$ be the set    
    \begin{align*}
      & \set{t \mid s R_{a} t, \text{ $\tuple{\M,t}$ is bisimilar to
          $\tuple{\M,s'}$}, \tand \downset{t} \cap U \neq \emptyset},
    \end{align*}
    where we have used $\downset{t}$ to denote
    the subtree of $\tuple{\M,s_{0}}$ rooted at $t$.  Recall that the
    cardinality of $U$ is strictly smaller than $\kappa$ and so is the
    cardinality of $S$ once it is at most equal to the cardinality of
    $U$. But the cardinality of
    $\set{ t\mid s R_{a} t, \text{ $\tuple{\M,t}$ is bisimilar to
        $\tuple{\M,s'}$}}$ is at least $\kappa$ (recall
    $\tuple{\M,s_{0}}$ is a \expanded tree model). Consequently, there
    must be a successor $s''$ of $s$ such that $\tuple{\M,s''}$ is
    bisimilar to $\tuple{\M,s'}$ and which does not belong to $S$,
    that is $\downset s'' \cap U = \emptyset$ (i.e. no states in $U$
    are reachable from $s''$).  
    Since $\NN,s'' \forces \psi$ and
    $\downset s'' \cap U = \emptyset$, we have
    $\NN,s'' \forces \psi[\bot/x]$.  Yet $\tuple{\M,s''}$ and
    $\tuple{\M,s'}$ are bisimilar and since $\NN$ is obtained from
    $\M$ just by modifying the value of the variable $x$,
    $\tuple{\NN,s''}$ and $\tuple{\NN,s'}$ are
    $(Prop\setminus\set{x},\Act)$-bisimilar.  As stated before, this
    and $\NN,s'' \forces \psi[\bot/x]$ imply 
    $\NN,s' \forces \psi[\bot/x]$.
  \end{proofofclaim}
  To complete the proof of Proposition~\ref{prop:maincontinuous} we
  need to argue that the strategy so defined for Eva to play in the
  game $\G(\M,\fphi)$ is winning. The only difficulty in asserting
  this is to exclude the case where the initial simulation leads to a pair
  of positions of the form $(s,\x[x/\x])$ and $(s,\x[\bot/\x])$. This
  is however excluded since in $\lphi$ all the occurrences of $\x$ are
  \Boxed, so we are enforced to go through the second step of the
  strategy.
\end{proof}

\begin{prop}
  \label{prop:maincontinuoustwo}
  If, for some regular cardinal $\kappa$, $\phi \in \Lmu$ is \continuous, then
  $\phi$ is equivalent to $\fphi$.
\end{prop}
\begin{proof}
  Notice that, by monotonicity in the variable $x$, $\fphi \impl \phi$
  is a tautology. Proposition~\ref{prop:maincontinuous} exhibits the
  converse implication as another tautology.
\end{proof}

\begin{thm}
  \label{thm:mainone}
  If for some regular cardinal $\kappa$, $\phi \in \Lmu$ is a
  \continuous formula, then $\phi$ is equivalent to a formula
  $\phi' \in\C(x)$.
\end{thm}
\begin{proof}
  Suppose that $\phi$ is \continuous. By Corollary~\ref{cor:flatinC},
  $\phi$ is equivalent to a formula $\psi$ with
  $\flattenNp{\psi} \in \C(x)$. Clearly, $\psi$ is \continuous as
  well, so it is equivalent to $\flattenNp{\psi} $ by
  Proposition~\ref{prop:maincontinuoustwo}. It follows that $\phi$ is
  equivalent to $\flattenNp{\psi} \in \C(x)$.
\end{proof}
  A fragment of the modal $\mu$-calculus is a subset of $\Lmu$.  For
  an infinite regular cardinal $\kappa$, we let $\CC[\kappa](x)$ be
  the set of \kcontinuous formulas $\phi(x) \in \Lmu$, cf.
  Definition~\ref{def:kcontinuous}.
  We say that a fragment $\mathcal{F}$ of the modal $\mu$-calculus
  \emph{is determined by a continuity condition} if, for some infinite
  regular cardinal $\kappa$, $\mathcal{F} = \CC[\kappa](x)$.
Combining
the main result of \cite{Fontaine08} and Theorem~\ref{thm:mainone},
we immediately obtain the following result.
\begin{thm}
  \label{thm:twofragments}
  There are only two fragments of the modal $\mu$-calculus determined
  by continuity conditions: the fragment $\C[0](x)$ and the fragment
  $\C(x)$.
\end{thm}

\begin{thm}
  The following problem is decidable: given a formula
  $\phi(x) \in \Lmu$, is $\phi(x)$ \continuous for some regular
  cardinal $\kappa$?
\end{thm}
\begin{proof}
  From what has been exposed above, $\phi$ is \continuous if and only
  if it equivalent to the formula $\phi' \in \C(x)$, where
  $\phi' = \flatten{\boxingNp[x]{\phi}}$. It is then enough to observe that
  there are effective processes to construct the formula $\phi'$ and
  to check whether $\phi$ is equivalent to $\phi'$.
\end{proof}


\section{On $p$-definability}
\label{sec:models}

We collect in this section some technical results, mainly on relating
different types of submodels via formulas, that we shall use later to
prove two main results on closure ordinals of the modal
$\mu$-calculus, Theorem~\ref{thm:omegaonemonomodal} and
Theorem~\ref{thm:closuresum}.

We start recalling the usual notion of Kripke frame (hereinafter
referred to as frame).  An \emph{$\Act$-frame} (or simply, a frame, if
$\Act$ is understood) is a pair $\FF = \langle \UF,\Ras\rangle$ where
$\UF$ is a set and $R_{a} \subseteq \UF \times \UF$, for each
$a \in \Act$ -- in other words, a frame is a model without a valuation
of propositional variables.  If $v : Prop \rto P(\UF)$ is a valuation,
then we denote by $\FF_{v}$ the model $\tuple{\FF,v}$.
The complex algebra $\FF^{\sharp}$ of a frame $\FF$ is the Boolean
algebra of subsets of $\UF$ endowed with (the interpretation of) the
modal operators $\pos[a]_{\FF^{\sharp}}$, $a \in \Act$, defined by
\begin{align*}
  \pos[a]_{\FF^{\sharp}}(S) & := \set{s \in \UF \mid \exists s' \in S
    \tst sR_{a} s'}, \quad \text{for $S \subseteq \UF$.}
\end{align*}
We consider next two frames $\FF$ and $\GG$ such that
$\U\GG \subseteq \U\FF$. $\FF$ and $\GG$ might have different sets of
actions: say that $\FF$ is an $A$-frame, $\GG$ is a $B$-frame, while
we do not suppose that $A = B$. To ease the reading, we let
$F := \U\FF$ and $G := \U\GG$, so $G \subseteq F$.

The following definition formalizes the idea that each modal operator
$\pos[b]$ of the algebra $\G^{\sharp}$ is described using a term of
the algebra $\FF^{\sharp}$.
\begin{defi}
  Let 
  $\Psi = \set{\psi_{b} \in \Lmu[p,q] \mid b \in B}$ be a collection
  of formulas containing only the free variables $p,q$ in positive
  position. If $\FF$ and $\GG$ are frames as above, then we say that
  \emph{$\GG$ is $\y$-defined in $\FF$ by $\Psi$} if, for each
  $b \in B$ and each $S \subseteq F$,
  \begin{align*}
    \pos[b]_{\complex{\GG}}(G \cap S) & =
    \eval[\FF_{\substitute{G/\y,S/\z}}]{\psi_{b}(\y,\z)}\,.
   \end{align*}
\end{defi}
Above $\substitute{G/\y,S/\z}$ is the valuation that sends $\y$ to $G$
and $\z$ to $S$ (and, say, any other propositional variable to
$\emptyset$). In this sense, $\FF_{\substitute{G/\y,S/\z}}$ denotes
the model $\tuple{\FF,\substitute{G/\y,S/\z}}$.

\begin{exa}
  \label{ex:inducedsubframe}
  Suppose that $\GG$ is a \emph{subframe} of
  $\FF=\langle F,\set{R_{a}\mid a \in A}\rangle$, by which we mean
  that $A = B$, $\GG = \langle G,\set{R'_{a}\mid a \in A}\rangle$ with
  $R'_{a} = R_{a} \cap \,G \times G$, for each $a \in A$. Then
  $\GG$ is $p$-defined in $\FF$ by the collection of formulas
  $\set{p \land \pos[a](p \land q) \mid a \in A}$.
  \MYendexample{ex:inducedsubframe}
\end{exa}
\correction[]{Any of the next two examples illustrating the notion of
  $p$-definability}{The two examples we present below illustrate the
  notion of $p$-definability. Moreover, they both} shall allow (in
conjunction with Proposition~\ref{prop:translation}) to transfer
results from a bimodal setting (that is, when $\card(Act) = 2$) to a
monomodal one ($\card(Act) = 1$). In particular, \correction[]{we
  shall use one of these examples }{ the second example shall be used
  to prove Theorem~\ref{thm:omegaonemonomodal}.}

\correction[]{We let therefore}{In the following  $B :=\set{h,v}$ and $A$
is a singleton.}  The choice of the letters is suggested by the
construction in Section~\ref{sec:omega} where the actions $h$ and $v$
are interpreted respectively as horizontal and vertical
transitions. 
\begin{exa}
  \label{ex:referee}
  We are thankful to an anonymous referee for suggesting the following
  construction.  Given a bimodal frame $\G$, we define a monomodal
  frame $\FF$ on the disjoint union of the sets $\U{\G}$ and $R_{v}$
  by letting the accessibility relation be as follows:
  \begin{align*}
    x \step y\,,& \quad\text{ when }\quad x \step[h] y\,, \\
    x \step (x,y)\tand (x,y) \step y \,,& \quad\text{ when }\quad x \step[v] y\,\,.
  \end{align*}
  Clearly $\U{\G}$ embeds into $\U{\FF}$. By identifying $\U{G}$
  with its image in $\U{\FF}$,
  $\G$ is $p$-defined in $\FF$ by $\Psi = \set{\psi_{v},\psi_{h}}$,
  where
  \begin{align*}
    \psi_{h}(p,q) & = \y \land \pos(\y \land
    \z)\,,\\
    \psi_{v}(p,q) & = \y \land \pos(\neg \y \land 
    \pos(\y \land q))\,.
    \tag*{\MYendexample{ex:referee}}
  \end{align*}
\end{exa}

\begin{exa}[
  Thomason's coding of bimodal logic into monomodal logic]
  \label{ex:Thomason}
  In \cite{Thomason1980}, see also \cite[Section 4]{KrachtWolter97},
  Thomason constructs:
  \begin{enumerate}[label=(\roman*)]
  \item a monomodal formula $\simul{\phi}$, for each (fixed-point
    free) bimodal formula $\phi$; 
  \item a monomodal model $\simul{\M}$ and an injective function
    $(-)^{\circ} : \U{\M} \rto \U{\simul{\M}}$, for each bimodal model
    $\M$.
  \end{enumerate}
  These data have
  the following property:
  \begin{fact}
    \label{fact:thomason}
    For each
    $s \in \UM$, $\M,s \forces \phi$ if and only if
    $\simul{M},\wcirc{s} \forces \simul{\phi}$.
  \end{fact}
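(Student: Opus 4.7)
The plan is to proceed by structural induction on the fixed-point free bimodal formula $\phi$, with the inductive step tuned to the specific shape of Thomason's translation $\simul{(\,{-}\,)}$ and of the embedding $\circfunction : \U{\M} \rto \U{\simul{\M}}$. Before attacking the induction, I would isolate two design invariants of the construction. First, an atomic agreement property: for each $s \in \U{\M}$ and each propositional variable $p$ occurring in $\phi$, $\simul{\M},\wcirc{s} \forces p$ iff $\M,s \forces p$. Second, a modal simulation property: for each $b \in \set{h,v}$, each $s \in \U{\M}$ and each monomodal formula $\chi$, $\simul{\M},\wcirc{s} \forces \simul{\pos[b]}\chi$ iff there exists $t \in \U{\M}$ with $sR_{b}t$ and $\simul{\M},\wcirc{t} \forces \chi$. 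This second invariant is the heart of the coding: the auxiliary states of $\simul{\M}$, which lie outside the image of $\circfunction$, are distinguishable by dedicated monomodal subformulas, and this is what allows the single modality of $\simul{\M}$ to simulate both $R_{h}$ and $R_{v}$.

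Granted these two invariants, the induction on $\phi$ is routine. The base case of atomic formulas and of their negations is exactly the atomic agreement property. Boolean connectives are handled by the observation, to be verified by inspection of Thomason's syntactic translation, that $\simul{(\,{-}\,)}$ commutes with $\land$ and $\vee$, so the inductive hypothesis transfers through unchanged. For the modal case $\phi = \pos[b]\psi$ with $b \in \set{h,v}$, one applies the modal simulation property with $\chi = \simul{\psi}$ to obtain that $\simul{\M},\wcirc{s} \forces \simul{\phi}$ iff there exists $t$ with $sR_{b}t$ and $\simul{\M},\wcirc{t} \forces \simul{\psi}$, and then invokes the inductive hypothesis on $\psi$ to rewrite this as the existence of $t$ with $sR_{b}t$ and $\M,t \forces \psi$, which is exactly $\M,s \forces \pos[b]\psi$. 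The necessity modalities $\nec[b]\psi$ are dealt with dually, for instance by rewriting them as $\neg \pos[b]\neg \psi$, which is licit since $\phi$ is fixed-point free.

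The genuine work is not the induction itself but the verification of the modal simulation property for Thomason's concrete construction. One must check that the pattern of auxiliary states introduced in $\simul{\M}$ is rigid enough that a monomodal formula can tell simulated $R_{h}$-steps apart from simulated $R_{v}$-steps, yet flexible enough that both kinds of simulated steps are available in $\simul{\M}$ whenever the underlying $R_{h}$ or $R_{v}$ provides them. Once this combinatorial lemma is established, the statement of the Fact follows mechanically from the induction outlined above.
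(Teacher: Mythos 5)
Your overall skeleton is the right one and is, in essence, the classical proof of Thomason's simulation theorem: a structural induction in which the atomic and Boolean cases are immediate and all the content is concentrated in a modal simulation lemma. Be aware, though, that the paper does not reprove this Fact: it recalls it from \cite{Thomason1980} (see also \cite{KrachtWolter97}) and merely recasts the reason it holds, namely that $\G$, the frame of the image of $\M$ inside $\simul{\M}$, is $p$-defined in the frame $\FF$ of $\simul{\M}$ by the formulas $\psi_{h}(p,q)=p\land\pos(p\land q)$ and $\psi_{v}(p,q)=p\land\pos(\neg p\land\pos(\neg p\land\pos(p\land q)))$. Your ``modal simulation property'' is exactly this $p$-definability condition, and your induction is the fixed-point-free shadow of Proposition~\ref{prop:translation}, which runs the same induction but also through $\mu$ and $\nu$ so that the simulation lifts to the whole $\mu$-calculus.

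The genuine gap is that the one step carrying all the content is announced rather than proved. You never exhibit the simulating patterns, never check the two directions of the simulation (every $R_{h}$- or $R_{v}$-step of $\M$ produces the corresponding pattern between image points, and, conversely, every instance of the pattern issuing from $\wcirc{s}$ must end in some $\wcirc{t}$ with $sR_{h}t$, respectively $sR_{v}t$; in the latter verification one uses that a first step into the pit $\pit$ is harmless because $\pit$ has no successors), and, most importantly, you never say how the image of $\circfunction$ is recognized inside $\simul{\M}$. Since $\simul{\phi}$ contains no auxiliary propositional variable, your ``dedicated monomodal subformulas'' must in particular define the image, and this is precisely Thomason's trick with the pit: a point of $\simul{\M}$ lies in the image if and only if it satisfies $\pos\nec\bot$, because exactly the points of the form $(x,h)$ have the dead end $\pit$ as a successor. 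Without these verifications the proposal is a correct plan rather than a proof; with them it coincides with the standard argument and with the paper's $p$-definability formulation.
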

  We recall how $\simul{\M}$ is defined: for a $\set{h,v}$-model $\M$,
  $\simul{\M}$ is the monomodal model with
  $\U{\simul{\M}} = \UM\times \set{h,v} \dunion \set{\pit}$, such that
  $v(x,i) = v(x)$ and whose accessibility relation $R$ is described as
  follows:
  \begin{align*}
    (x,h) \step (y,h) \,,& \quad\text{ when }\quad x \step[h] y\,, \\
    (x,v) \step (y,v) \,,& \quad\text{ when }\quad x \step[v] y\,, \\
    (x,v) \step (x,h) \,,& \quad (x,h) \step (x,v)\,,\quad\text{ and }
    (x,h)\step\pit\,, &
  \end{align*}
  for each $x,y \in \UM$.  Since the function sending $x \in \UM$ to
  $x^{\circ } := (x,h) \in \U{\simul{\M}}$ is injective, we can
  identify $\UM$ with a subset of $\U{\simul{\M}}$.  Call $\NN$ the
  image of $\M$ within $\U{\simul{\M}}$, call $\G$ the underlying
  frame of $\NN$ and $\FF$ the underlying frame of
  $\U{\simul{\M}}$. Fact~\ref{fact:thomason} relies on $\G$ being
  $p$-defined in $\FF$ by $\Psi = \set{\psi_{h},\psi_{v}}$, where
 \begin{align*}
   \psi_{h}(p,q) & = \y \land \pos(\y \land
   \z)\,,\\
   \psi_{v}(p,q) & = \y \land \pos(\neg \y \land \pos(\neg \y \land
   \pos(\y \land q)))\,.
 \end{align*}
 The reader has remarked the similarity with the previous example.
 Thomason's construction is slightly more subtle: by adding the pit
 $\pit$ to $\M^{sim}$ and transitions as in the third line of the
 above display, the image of $\M$ under the embedding becomes
 definable by the formula $\pos\nec\bot$.  Consequently, the monomodal
 formula $\simul{\phi}$ does not contain $p$ as an additional
 propositional variable.  \MYendexample{ex:Thomason}
\end{exa}

We tackle next the proof of the main technical result of this section,
Proposition~\ref{prop:translation}. \rephrase{The Proposition}{This
  proposition} allows lifting standard simulation results (such as
Thomason's one) from modal logic to the modal $\mu$-calculus.
\begin{defi}
  \label{def:translation}
  Let $p \not \in Prop$ be a fresh variable and let $\Psi :=
  \set{\psi_{b} \in \Lmu[p,q]\mid b \in B}$.
  The formula $\tr(\phi)$ is defined by induction as follows:
  \begin{align*}
    \tr(y) & :=  \y \land y
    & 
    \tr(\neg y) & :=  \y \land  \neg y \\
    \tr(\bot) & :=  \bot
    & 
    \tr(\top) & :=  \y \\
    \tr(\psi_{0} @ \psi_{1}) & \rightside{:= \tr(\psi_{0}) @
      \tr(\psi_{1})\,, \quad @ \in \set{\land,\vee}}    \\
    \tr(\pos[b] \psi)  & := \psi_{b}[\tr(\psi)/\z] \\
    \tr(\nec[b] \psi)  & := \smallRHS{\y \land
      \psi^{op}_{b}[\tr(\psi)/\z]} \\
    \tr(\mu_{z}.\psi) & := \mu_{z}.\tr(\psi)
    &
    \tr(\nu_{z}.\psi) & := \nu_{z}.\tr(\psi)\,.
  \end{align*}
\end{defi}
In the above definition, $\psi_{b}^{op}$ is a formula dual to
$\psi_{b}$, thus semantically behaving as $\neg\psi_{b}[\neg
\z/\z]$. We need this since in the grammar \eqref{grammar:mucalculus}
we allowed negation only on propositional variables.

Aiming at a proof of
the next Proposition, let us introduce/recall
some notation: we let $\pi : P(F) \rto P(G)$ be defined by
$\pi(S) := S \cap G$; if $v : Prop \rto P(F)$, then
$\pi \circ v : Prop \rto P(G)$ is the valuation in $G$ such that
$(\pi \circ v)(y) := G \cap v(y)$, for each $y \in Prop$.
\begin{prop}
  \label{prop:translation}
  Let $p,\Psi$, and $\tr$ be as in Definition~\ref{def:translation}.
  If $\GG$ is
  $\y$-defined in $\FF$ by $\Psi$, then, for each valuation
  $v : Prop \rto P(F)$,
  \begin{align}
    \label{eq:tr}
    \evalG{\phi} & = \evalF{\tr(\phi)}\,.
  \end{align}
\end{prop}
\begin{rem}
  \label{remark:cDiagram}
  For a formula $\phi$, let us denote by
  $\eval[\FF\substitute{\pair{\y}{G}}]{\tr(\phi)}$ the mapping from
  $P(F)^{Prop}$ to $P(F)$ sending a valuation $v \in P(F)^{Prop}$
  to $\evalF{\tr(\phi)} \in P(F)$; let us denote by $\eval[\GG]{\phi}$
  the mapping sending a valuation $v' \in P(G)^{Prop}$ to
  $\eval[\GG_{v'}]{\phi} \in P(G)$.
  The statement of Proposition~\ref{prop:translation} implies that
  $\eval[\FF\substitute{G/\y}]{\tr(\phi)}$ takes values in $P(G)$ and,
  moreover, that the following diagram commutes:
   $$
   \xymatrix{
     P(F)^{Prop} \ar[d]^{\pi \circ \_}\ar[rrd]^{\qquad\eval[\FF\substitute{G/\y}]{\tr(\phi)}} \\
     P(G)^{Prop}\ar[rr]_{\eval[\GG]{\phi}} & &
     P(G)\,\,.
   }
   $$
\end{rem}
\begin{proof}[Proof of Proposition~\ref{prop:translation}]
  The proof that equation~\eqref{eq:tr} holds is by induction on
  formulas.  The basic cases are treated below:
  \begin{align*}
    \evalF{\tr(y)} & = \evalF{\y \land y}  = G \cap v(y) = \evalG{y}\,, \\
    \evalF{\tr(\neg y)} & = \evalF{\y \land \neg y}  = G \cap v(\neg y) = G
    \cap \complement{v(y)}\\
    & = G
    \cap \complement{(G \cap v(y))}  =  \evalG{\neg y}\,, \\
    \evalF{\tr(\bot)} & = \evalF{\bot} = \emptyset = \evalG{\bot}\,, \\
    \evalF{\tr(\top)} & = \evalF{p} = G = \evalG{\top}\,.
  \end{align*}
  For formulas of the form $\psi_{0}@\psi_{1}$ with
  $@ \in \set{\land,\vee}$, the result is immediate by induction. We
  give below explicit computations for formulas whose main logical
  connector is a modal operator: 
  \begin{align*}
    \evalF{\tr(\pos[b]\psi)}
    & = \evalF{\psi_{b}[\tr(\psi)/\z]} \\
    & =  \evalF[\pair{\z}{\evalF{\tr(\psi)}}]{\psi_{b}} \\
    & 
    =  \evalG[\pair{\z}{\evalG{\psi}}]{\pos[b]\z}  = \evalG{\pos[b]\psi},\\
    \evalF{\tr(\nec[b] \psi)}  & = \evalF{\y \land \neg \psi_{b}[\neg
      \tr(\psi)/\z]} \\
    & =  G \cap \complement{(\,\evalF{\psi_{b}[\neg \tr(\psi)/\z]}\,)} \\
    \tag*{with $S = \evalF{\tr(\psi)} = \evalG{\phi}$}
    & =  G \cap  \complement{(\,\evalF[\pair{\z}{\complement{S}}]{\psi_{b}}\,)} \\
    & =  G \cap \complement{(\,\evalG[G \cap \complement{S}/\z]{\pos[b]\z}\,)}
    \\
    & =  \evalG[\pair{\z}{S}]{\neg \pos[b]\neg\z} \\
    & =  \evalG[\pair{\z}{S}]{\nec[b]\z} \\
    & =  \evalG[\pair{\z}{\evalG{\psi}}]{\nec[b]\z} \\
    & =  \evalG{\nec[b]\psi} \,.
\end{align*}
We finally consider least and the greatest \FP formulas of the form
$\mu_{z}.\phi$ and $\nu_{z}.\phi$. Consider the two functions defined
by
\begin{align*}
  f(S) & := \evalF[\pair{z}{S}]{\tr(\phi)}  \quad \tand \quad 
  g(T) := \evalG[\pair{z}{T}]{\phi}\,
\end{align*}
and remark firstly their typing, that is we have $f : P(F) \rto P(F)$ and
$g : P(G) \rto P(G)$. Since by the inductive hypothesis we have
\begin{align*}
   \eval[\FF_{w}\substitute{\pair{\y}{G}}]{\tr(\phi)}  & 
   = \eval[\GG_{\pi \circ w}]{\phi}
 \end{align*}
 for each valuation $w$, this in particular holds for the valuation
 $v[\pair{z}{S}]$, with $S \subseteq F$; that is, we have
 \begin{align}
   \label{eq:fandg}
   f(S)  &= g(S \cap G)\,,
 \end{align}
 for each $S \subseteq F$.
 Let us denote by $\Pref_{h}$ the set of prefixed-points of a monotone
 function $h$ and by $\lfp.h$ its least element.\footnote{We prefer to
   use here the notation $\lfp$ in place of $\mu$ so to reserve the
   symbol $\mu$ for the syntax and to emphasize the gap
   between semantics and syntax that we are trying to fill.} 
 It immediately follows from
 equation~\eqref{eq:fandg} that $\Pref_{g}$ is included in $\Pref_{f}$
 and that $S \in \Pref_{f}$ implies $\pi(S) \in \Pref_{g}$. Therefore
 the inclusion of $\Pref_{g}$ is into $\Pref_{f}$ has $\pi$ as 
 an \Uadjoint, so it is a \Ladjoint
 and therefore (as usual for \Ladjoint{s})
 it preserves the least
 element: $\lfp.g = \lfp.f$. We obtain
 \begin{align*}
   \evalF{\tr(\mu_{z}.\psi)}
   & = \lfp.f
   = \lfp.g = \evalG{\mu_{z}.\psi}\,.
 \end{align*}
 For the greatest fixed-point, denote by $\Post_{h}$ the set of
 postfixed-points of some monotone function $h$ and by $\gfp.h$ its
 greatest element. Using equation~\ref{eq:fandg}, observe that
 $S \subseteq f(S)$ implies $S \subseteq G$. It immediately follows
 that $\Post_{f} = \Post_{g}$, so
 \begin{align*}
   \evalF{\tr(\mu_{z}.\psi)}
   & = \gfp.f 
   = \gfp.g = \evalG{\mu_{z}.\psi}\,.
 \end{align*}
 \ENDOFPROOF[Proposition~\ref{prop:translation}]
\end{proof}

It has been easier for us to expose the proof of
Proposition~\ref{prop:translation} using frames.
Next, we recast our previous observations using models, for the
particular cases of submodels (Example~\ref{ex:inducedsubframe}) and
of bimodal models (Examples~\ref{ex:referee} and~\ref{ex:Thomason}).

If $\M = \langle \UM,\RM,v \rangle$
and $\NN = \langle \U{\NN},\RN,v_\NN \rangle$ are models, then we say
that $\NN$ is a \emph{submodel} of $\M$ if $\U{\NN}$ is a subset of
$\U{\M}$ and, for each $y \in Prop$ and each $a \in Act$,
\begin{align*}
  v_{\NN}(y) & = v_{\M}(y) \cap \U{\NN}\tand 
    R_{a}^{\NN} = \U{\NN}\times\U{\NN} \cap R_{a}^{\M}.
\end{align*}
Thus, $\NN$ is a submodel of $\M$ if and only if, for some frame $\FF$,
for a valuation $v : Prop \rto P(\U{\FF})$, and for a subframe $\GG$
of $\FF$, $\M = \FF_{v}$ and $\NN = \GG_{\pi \circ v}$.  Every subset
$S$ of $\UM$ induces the submodel $\M\restr{S}$ of $\M$ defined as
follows:
\begin{align}
  \label{eq:defindsubmodel}
  \M\restr{S} & := \langle S, \set{R_{a} \cap S \times S \mid a \in
      Act },v' \rangle
\end{align}
where $v'(y) = v(y) \cap S$, is a submodel of $\M$ and it is called
the \emph{submodel of $\M$ induced} by $S$.  We write $\trp(\phi)$ in
place of $\tr(\phi)$ if $\Psi$ is the collection of formulas given in
Example~\ref{ex:inducedsubframe}.  Proposition~\ref{prop:translation}
instantiates then to models and submodels as follows:
\begin{prop}
  \label{prop:trsubmodels}
  For each formula $\phi \in \Lmu$, the formula $\trp(\phi) \in \Lmu$
  (which contains $\y$ as a new propositional variable) has the
  following property: for each model $\M$, each subset
  $S \subseteq \UM$, and each $s \in \UM$, 
  \begin{align*}
    \M[\pair{\y}{S}],s \models \trp(\phi)
    & \; \tiff \; s \in S \text{ and } \M\restr{S}, s \models \phi \,.
  \end{align*}
\end{prop}
A subset $S$ of $\UM$ is \emph{closed} if $s\in S$ and $sR_a s'$ imply
$s'\in S$, for every $a\in \Act$.  A submodel $\NN$ of $\M$ is
\emph{closed} if $\U{\NN}$ is a closed subset of $\UM$.  The attentive
reader might have already observed that if $S$ is a closed subset of
$\M$, then the statement of Proposition~\ref{prop:trsubmodels} holds
with the simpler $\y \land \phi$ in place of the recursively defined
$\trp(\phi)$.

Let us fix $\Psi$ from one of Example~\ref{ex:referee}
or~\ref{ex:Thomason}.  The translating function $\tr$ has now the
following properties:
\begin{enumerate}[label=(\roman*)]
\item it associates to each bimodal formula $\phi$ of the modal
  $\mu$-calculus a monomodal formula $\tr(\phi)$ of the modal
  $\mu$-calculus,
\item the formula $\tr(\phi)$ contains a new propositional variable,
\item the formula $\tr(\phi)$ belongs to $\C(x)$ if $\phi$
  does.
\end{enumerate}
Moreover, in case $\Psi$ comes from Example~\ref{ex:Thomason}, then
(ii) can be strengthened to the stament that $\tr(\phi)$ has exactly
the same propositional variables as $\phi$.
Proposition~\ref{prop:translation} then yields the following result.

\begin{prop}
  For each bimodal model $\M$ there is a monomodal model $\simul{\M}$
  and an injective function
  $\circfunction : \UM \rto \U{\simul{\M}}$ such that, for each
  $s \in \UM$, $\M,s \forces \phi$ if and only if
  $\simul{\M}[\pair{p}{S}],\wcirc{s} \forces \tr(\phi)$, where
  $S$ is the image of $\UM$ under the injective function
  $\circfunction$.
\end{prop}

Proposition~\ref{prop:translation} also yields the following result,
needed to transfer results on closure ordinals:
\begin{prop}
  \label{prop:inducedsubmodel}
  \label{prop:induced&closedsubmodelalpha}
  Let $\phi \in \Lmu$
  with $x$ occurring positively in $\phi$.
  \begin{enumerate}[label=(\roman*)]
  \item  If $\M$ is a model and $S \subseteq \UM$,    then
  \begin{align*}
    \trp(\phi)_{\variant{\pair{p}{S}}}^{\alpha}(\emptyset) &=
    \phi_{\M\restr{S}}^{\alpha}(\emptyset)\,.
  \end{align*}
  \item  If $\M$ is a model and $S \subseteq \UM$ is closed,    then
  \begin{align*}
    (p \land \phi)_{\variant{\pair{p}{S}}}^{\alpha}(\emptyset) &=
    \phi_{\M\restr{S}}^{\alpha}(\emptyset)\,.
  \end{align*}
\item If $\M$ is a bimodal model and both $\Psi$ and the construction
  $\simul{\M}$ come from one of the Examples~\ref{ex:referee}
  or~\ref{ex:Thomason}, then
  \begin{align*}
    \tr(\phi)_{\simul{\M}\substitute{\pair{p}{S}}}^{\alpha}(\emptyset)
    &= \wcirc{[\,\phi_{\M}^{\alpha}(\emptyset)\,]}\,,
  \end{align*}
  where $S$ is the image of $\UM$ under the injective function
  $\circfunction$.
  \end{enumerate} 
\end{prop}
\begin{proof}
  Let $\FF$, $\GG$, $F$, $G$ and $v : Prop \rto P(F)$ be as in the
  statement of Proposition~\ref{prop:translation}. If $S$ is a
  subset of $G$, then 
  \begin{align*}
    \tr(\phi)_{\FF_{v}[\pair{p}{G}]}(S) & =
    \eval[\FF_{v}\substitute{\pair{p}{G}}\substitute{\pair{x}{S}}]{\tr(\phi)}
    =
    \eval[\FF_{v}\substitute{\pair{x}{S}}\substitute{\pair{p}{G}}]{\tr(\phi)}
    \\
    & = \eval[\GG_{\pi \circ v}\substitute{\pair{x}{S}}]{\phi} 
    = \phi_{\GG_{\pi \circ v}}(S) \,.
  \end{align*}
  Then, by induction, we easily derive
  \begin{align*}
    (\tr(\phi)_{\FF_{v}[\pair{p}{G}]})^{\alpha}(\emptyset)
    & = \phi_{\GG_{\pi \circ v}}^{\alpha}(\emptyset)\,,
  \end{align*}
  for each ordinal $\alpha$.  The three statements above follow
  considering Examples~\ref{ex:inducedsubframe}, \ref{ex:referee}, and
  \ref{ex:Thomason}.
\end{proof}

Finally, consider again Example~\ref{ex:Thomason} and formulas
(resp. models) $\phi'$ (resp. $\M'$) defined by
\begin{align*}
  \phi' & := \tr(\psi)[\,\pos\nec\bot/p\,]\,, & \M' & :=
  \simul{\M}[\,\pair{p}{\eval{\pos\nec\bot/p}}\,]\,.
\end{align*}
Let us identify the injective function
$\circfunction: \UM \rto \U{\simul{\M}} $ with an inclusion (so that,
instead of embedding $\M$ into $\simul{M}$, we are actually extending
it into some bigger model).  We derive henceforth the following
simpler statement that we shall use in the next section to argue that
$\omega_{1}$ is the closure ordinal of a monomodal formula.  In the
statement the role of the special variable $p$ is not transparent
anymore.
\begin{prop}
  \label{prop:bimodaltomonomodal}
  For each bimodal formula $\phi$ there is a monomodal formula $\phi'$
  (with the same free variables of $\phi$) such that if
  $\phi \in \C(x)$, then $\phi' \in \C(x)$, and with the following
  property: for each bimodal model $\M$ there is a monomodal model
  $\M'$ (that does not depend on $\phi$) 
  such that,
  \begin{enumerate}[label=(\roman*)]
  \item $\UM \subseteq \U{\M'}$,
  \item $\M,s \forces \phi$ if and only if
    $\M',s \forces \phi'$, for each $s \in \UM$,
  \item $(\phi'_{\M'})^{\alpha}(\emptyset) =
      \phi_{\M}^{\alpha}(\emptyset)$, for each ordinal $\alpha$.
  \end{enumerate}
\end{prop}


\section{An uncountable closure ordinal}
\label{sec:closureordinals}


In this section we firstly formally define the notion of closure
ordinal, present some tools required later, here and in the next
section, and then we prove that $\omega_1$, the least uncountable
ordinal, is a closure ordinal of a formula of the modal
$\mu$-calculus.  We firstly prove it in a bimodal setting and then,
using the tools developed in the previous section, we argue that
$\omega_{1}$ is also the closure ordinal of a monomodal $\mu$-formula.


  For a formula $\phi(x)$ of the modal $\mu$-calculus and a Kripke
  model $\M$, let $\closure[\M]{\phi}$ be the least ordinal $\beta$
  for which $\phiM^{\beta}(\emptyset) =
  \phiM^{\beta+1}(\emptyset)$. 
  Recall from Definition~\ref{def:approximants} that we say that $\phiM$
  converges to its least fixed-point in exactly $\alpha$ steps when
  $\closure[\M]{\phi} = \alpha$.  
\begin{defi}
  \label{def:closureOrdinal}
  Let $\phi(x)$ be a formula of the modal $\mu$-calculus. We say that
  an ordinal $\alpha$ is the \emph{closure ordinal} of $\phi$ (and
  write $\closure{\phi} = \alpha$) if, for each model $\M$, the
  function $\phi_{\M}$ converges to its least fixed-point in at most
  $\alpha$ steps, and there exists a model $\M$ in which $\phi_{\M}$
  converges to its least fixed-point in exactly $\alpha$ steps.
\end{defi}

Elsewhere in the literature, see e.g. \cite{AfshariL13}, the closure
ordinal of a formula $\phi(x)$ w.r.t. a class of models $\clK$ is
defined as the supremum of the ordinals $\closure[\M]{\phi}$ for
$\M \in \clK$.
If $\clK$ is the class of Kripke models, then this definition coincides with the one
given above. This is a consequence of the class of Kripke models
being closed
  under disjoint unions: 
  consider a family $\set{\M_{i} \mid i \in I}$ such that
  $\alpha = \sup \set{\closure[\M_{i}]{\phi}}$;
  then the disjoint union $\dUnion_{i \in I} \U{\M_{i}}$ carries a
  canonical structure of a Kripke model, call it $\M$, and it is
  easily seen that $\closure[\M]{\phi} = \alpha$.  

\smallskip

The notions of \emph{closure ordinal of a formula on a structure} and
of \emph{closure ordinal of a structure} appear in the monograph
\cite[Chapter 2B]{EIAS}. 
The notion of closure ordinal presented here is on the other hand
strictly related to \emph{global} inductive definability, see
\cite{BM1978}. Indeed, it is well-known that each \FP-free modal
formula $\psi$ can be transformed into some equivalent first order
logic sentence $\sttr{\psi}$, known as the \emph{standard translation}
of $\psi$. The formula $\sttr{\psi}$ contains $y$ as the only
free-variable and is related to $\psi$ by the equivalence
$\M,s \forces \psi$ if and only if $\M \models \sttr{\psi}(s)$, where
$\M$ is considered as a relational structure for first-order
logic. The closure ordinal of a \FP-free modal formula $\phi(x)$, as
defined here and when it exists, coincides with the global closure
ordinal of the first-order inductive definition given by
$\sttr{\phi(x)}$.

Let us recall that formulas may have no closure ordinal. For example
$\phi(x) := \nec x$ has no closure ordinal.  Indeed, it is not
difficult to construct, for each ordinal $\alpha$, a model
$\M_{\alpha}$ such that $\phi_{\M}^{\alpha}(\emptyset)$ is strictly
included in $\phi_{\M}^{\alpha+1}(\emptyset)$. We collect with the
following Proposition the observations developed in the course of the
paper that are relevant to closure ordinals.
\begin{prop}\label{prop:w1upperbound}
  If a formula $\phi(x)$ belongs to the syntactic fragment $\C(x)$,
  then it has a closure ordinal $\closure{\phi(x)}$ and
  $\omega_{1}$ is an upper bound for $\closure{\phi(x)}$.  
\end{prop}
\begin{proof}
  The formula $\phi$ belongs to the syntactic fragment $\C(x)$, thus
  it is \continuous[$\aleph_{1}$] and, for every model $\M$,
  $\phi_{\M}$ is \continuous[$\aleph_{1}$]. It follows then from
  Proposition~\ref{prop:convkcontinuous} that
  $\phi_{\M}$ converges to its least fixed-point in at most
  $\omega_{1}$ steps.  Therefore, for such $\phi$,
  $\sup \set{\closure[\M]{\phi} \mid \M \text{ a Kripke model}} \leq
  \omega_{1}$. As we have seen at the beginning of this section, there
  exists a model $\M$ such that
  $\closure[\M]{\phi} = \sup \set{\closure[\M]{\phi} \mid \M \text{ a
      Kripke model}}$. 
\end{proof}


The following Lemma will be useful in the next section, when we shall
show that closure ordinals of the modal $\mu$-calculus are closed
under ordinal sum.
\begin{lem}
  \label{prop:totallfp}
  \label{prop:lfp.define.ordinals}
  \label{lemma:totallfp}
  \label{lemma:lfp.define.ordinals}
  Let $\alpha \neq 0$ be a closure ordinal of the modal
  $\mu$-calculus.  Among the formulas that have $\alpha$ as its
  closure ordinal there exists one formula $\phi(x)$ such that
  $\mu_{x}. {\phi}(x)$ is total in some model $\M$ where the
  convergence occurs in exactly $\alpha$ steps, that is,
  \begin{align*}
    \U\M&= \eval[\M]{\mu_{x}.\phi(x) }=
    \phiM^\alpha(\emptyset)\neq \phiM^{\alpha'}(\emptyset)\,,
    \tag*{ for every $\alpha'<\alpha$.}
  \end{align*}
\end{lem}
\begin{proof}
  For a formula $\psi(x)$, let $(\mu_{x}.\psi(x))^{op}$ be a formula
  semantically equivalent to the negation of $\mu_{x}.\psi(x)$ and
  define then
  \begin{align*}
    \phi(x) & := (\mu_{x}.\psi(x))^{op} \vee \psi(\,x \land \mu_{x}.\psi(x)\,)\,.
  \end{align*}
  Observe that $\phi(x)$ is not \wnamed, yet this will not be a 
  concern here.
  \rephrase{Let us verify that, for each ordinal $\gamma \geq 1$, we have
    \begin{align}
    \phi_{\M}^{\gamma}(\emptyset)
    & = (\mu.\psi_{\M}) \impl \psi_{\M}^{\gamma}(\emptyset)\,.
  \end{align}
  This is clear if $\gamma = 1$. Assuming this holds for $\gamma$,
  then
  \begin{align*}
    \phi_{\M}^{\gamma +1}(\emptyset) & = \mu.\psi_{\M} \impl
    \psi_{\M}(\mu.\psi_{\M} \impl \psi_{\M}^{\gamma}(\emptyset)
    \cap \mu.\psi_{\M}) \\
    & = \mu.\psi_{\M} \impl \psi_{\M}(\psi_{\M}^{\gamma}(\emptyset)
    \cap \mu.\psi_{\M})  \\
    & = \mu.\psi_{\M} \impl \psi_{\M}(\psi_{\M}^{\gamma}(\emptyset) ) = \mu.\psi_{\M}
    \impl \psi_{\M}^{\gamma + 1}(\emptyset)\,.
  \end{align*}
  The inductive step is obvious. Clearly, $\mu.\phi_{\M}$ is always
  the total set $\UM$.}{%
  For the sake of readability, let $\mu \eqdef \lfp.\psiM$. We verify
  next that
  \begin{align}
    \label{eq:characterisationIteration}
    \phi_{\M}^{\gamma}(\emptyset)
    & = \mu \impl \psi_{\M}^{\gamma}(\emptyset)\,,
    \qquad\text{for each ordinal $\gamma \geq 1$.}
  \end{align}
  The symbol $\impl$ used above stands for the Heyting implication of
  the Boolean algebra $P(\U{\M})$.
  Equation~\eqref{eq:characterisationIteration} clearly holds if
  $\gamma = 1$. Assuming the equation holds for $\gamma$, then
  \begin{align*}
    \phi_{\M}^{\gamma +1}(\emptyset) & = \mu\impl
    \psi_{\M}(\,(\mu\impl \psi_{\M}^{\gamma}(\emptyset))\,
    \cap \,\mu\,) \\
    & = \mu\impl \psi_{\M}(\,\psi_{\M}^{\gamma}(\emptyset)
    \cap \mu\,)  \\
    & = \mu\impl \psi_{\M}(\,\psi_{\M}^{\gamma}(\emptyset) \,)\,,
    \tag*{since
      $\psiM^{\gamma}(\emptyset) \subseteq \lfp.\psiM = \mu$,}
    \\
    & = \mu \impl \psi_{\M}^{\gamma + 1}(\,\emptyset\,)\,.
  \end{align*}
  The inductive step to a limit ordinal is obvious.  From
  equation~\eqref{eq:characterisationIteration} it follows that, for
  each $\gamma \neq 0$,
  $\phiM^{\gamma+1}(\emptyset) \subseteq \phiM^{\gamma}(\emptyset)$ if
  and only if
  $\psiM^{\gamma+1}(\emptyset) \subseteq \psiM^{\gamma}(\emptyset)$,
  so $\closure[\M]{\phi} = \closure[\M]{\psi}$ provided that
  $\closure[\M]{\psi} > 0$.
  Finally,
  $\eval[\M]{\mu_{x}.\phi(x)} =\lfp.\phi_{\M} = \mu \impl \mu = \UM$.  }
\end{proof}

\subsection{$\omega_{1}$ is a closure ordinal}
\label{sec:omega}
We are going to prove that $\omega_{1}$ is the closure ordinal of the
following bimodal formula:
\begin{align}
  \PHI(x) & := (\nu_{z}.\pos[v] x \land \pos[h] z) \vee \nec[v]
  \bot\,.
  \label{eq:defphi}
\end{align}

For the time being, consider $\Act=\sset{h,v}$; if
$\M = \langle \UM,R_{h},R_{v},v\rangle$ is a model, we think of
$R_{h}$ as a set of horizontal transitions and of $R_{v}$ as a set of
vertical transitions.
Thus, for $s \in \UM$, $\M,s \forces \PHI(x)$ if either (i) there are
no vertical transitions from $s$, or (ii) there exists an infinite
horizontal path from $s$ such that each state on this path has a
vertical transition to a state $s'$ such that $\M,s' \forces x$.

By
Proposition~\ref{prop:w1upperbound}, the formula $\PHI(x)$ has a
closure ordinal and $\closure{\PHI(x)}\leqslant \omega_{1}$.  In order
to prove that $\closure{\PHI(x)} = \omega_{1}$, we are going to
construct a model $\MO$ where
$\PHI^{\omega_{1}}_{\MO}(\emptyset) \not\subseteq
\PHI^{\alpha}_{\MO}(\emptyset)$ for each $\alpha < \omega_{1}$.

\medskip

The construction relies on a few combinatorial properties of posets
and ordinals that we recall here.
For a poset $P$ and an ordinal $\alpha$, an $\alpha$-chain in $P$ is a
subset $\set{p_{\beta} \mid \beta < \alpha} \subseteq P$, with
$p_{\beta} \leq p_{\gamma}$ whenever $\beta \leq \gamma < \alpha$.  An
$\alpha$-chain $\set{p_{\beta} \mid \beta < \alpha} \subseteq P$ is
\emph{cofinal} in $P$ if, for every $p \in P$ there exists
$\beta < \alpha$ with $p \leq p_{\beta}$.  The \emph{cofinality}
$\kappa_{P}$ of a poset $P$ is the least ordinal $\alpha$ for which
there exists an $\alpha$-chain cofinal in $P$.
Recall that an ordinal $\alpha$ might be identified with the poset
$\set{\beta \mid \text{$\beta$ is an ordinal}, \beta < \alpha}$ and so
$\kappa_{\alpha} = \omega$, whenever $\alpha$ is a countable infinite
limit ordinal; this means that, for such an $\alpha$, it is always
possible to pick an $\omega$-chain cofinal in $\alpha$.

\medskip

For a given ordinal $\alpha \leq \omega_{1}$, let
  \begin{align*}
    S_{\alpha} & := \set{(n,\beta) \mid 0 \leq n < \omega,\,
      \text{$\beta$ is an ordinal}, \beta < \alpha }
    \,.
  \end{align*}
  We define $\MO$ to be the model
  $\langle S_{\omega_{1}},R_{h},R_{v},v\rangle$ where
  $v(y)= \emptyset$, for each $y \in Prop$,
  horizontal transitions are of the form
    $(n,\beta) \step[h] (n +1,\beta)$, for each $n < \omega$ and
    each ordinal $\beta$, and 
    vertical transitions from a state $(n,\beta) \in S_{\omega_{1}}$
    are as follows:
    \begin{itemize}
    \item if $\beta = 0$, then there are no vertical transitions
      outgoing from $(n,0)$;
      
    \item if $\beta = \gamma +1$ is a successor ordinal, then the
      only vertical transitions are of the form
      $(n,\gamma + 1) \step[v] (0,\gamma)$;
    \item if $\beta$ is a countable limit ordinal distinct from $0$,
      then vertical transitions are of the form
      $(n,\beta) \step[v] (0,\beta_{n} )$, where
      the set $\set{\beta_{n} \mid n < \omega}$ is a chosen
      $\omega$-chain cofinal in $\beta$.
    \end{itemize}

\begin{lem}\label{lemma:Salphaplusone}
  For each countable ordinal $\alpha$, we have
  \begin{align*}
    \phi_{\MO}(S_{\alpha}) & = S_{\alpha + 1}.
  \end{align*}
  Consequently,
  for each ordinal $\alpha \leq \omega_{1}$, we have
    $\phi_{\MO}^{\alpha}(\emptyset) = S_{\alpha}$.
  \end{lem}
  \begin{proof}
    If $\alpha = 0$, then $S_{\alpha} = \emptyset$ and
    \begin{align*}
      \phi_{\MO}(S_{0}) & = \phi_{\MO}(\emptyset) =
      \eval[\MO\substitute{\pair{x}{\emptyset}}]{\nu_{z}.(\pos[h] z \land
        \pos[v] x) \vee \nec[v]\bot} \\ & = \eval[\MO]{\nec[v]\bot}
      = \set{(n,0) \mid n < \omega}
      = S_{1}\,.
    \end{align*}
    Consider now an ordinal $\alpha > 0$.

    \smallskip
    
    Let us argue firstly that $ S_{\alpha + 1} \subseteq \phi_{\MO}(S_{\alpha})$.
    Let $(n,\beta) \in S_{\alpha + 1}$, so
    $\beta < \alpha + 1$ implies
    $\beta \leq \alpha$.
    From $(n,\beta)$, 
    there is the infinite horizontal path
    $\set{(m,\beta) \mid n \leq m < \omega}$ and each vertex on this
    path has a vertical transition to a vertex $(0,\beta')$ with
    $\beta' < \beta \leq\alpha$, in particular
    $(0,\beta')\in S_{\alpha}$. Therefore $(n,\beta)\in \phi_{\MO}(S_{\alpha})$.

    \smallskip

    Next, we argue that the converse inclusion,
    $\phi_{\MO}(S_{\alpha}) \subseteq S_{\alpha + 1}$, holds.
    Suppose $(n,\beta) \in \phi(S_{\alpha})$. If there are no vertical
    transitions from $(n,\beta)$ then $\beta = 0$ and
    $(n,\beta) = (n,0) \in S_{1} \subseteq S_{\alpha +1}$, since
    $S_{\beta} \subseteq S_{\gamma}$ for $\beta \leq
    \gamma$. Otherwise $\beta > 0$, there is an infinite horizontal
    path from $(n,\beta)$ and each vertex on this path has a
    transition to some vertex in $S_{\alpha}$. Notice that such an
    infinite horizontal path is, necessarily, the path
    $\pi := \set{(m,\beta) \mid n \leq m < \omega}$.

    If $\beta = \gamma +1$ is a successor ordinal then the unique
    outgoing vertical transition from $(n,\beta)$ is to
    $(0,\gamma)$. Hence $(0,\gamma)\in S_{\alpha}$, thus
    $\gamma < \alpha$, $\beta = \gamma + 1< \alpha + 1$ and
    $(n,\beta)\in S_{\alpha +1}$. Otherwise $\beta$ is a limit ordinal
    distinct from $0$ and, for each $m\geqslant n$, there is a
    vertical transition $(m,\beta)\step[v] (0,\beta_{m})$ with
    $ (0,\beta_{m})\in S_{\alpha}$, so $\beta_{m} < \alpha$.  If
    $ \alpha +1 \leq \beta$, then $\alpha < \beta$, that is,
    $\alpha \in \beta$. Since the $\omega$-chain
    $\set{\beta_{k} \mid k \in\omega}$ is cofinal in $\beta$, we can
    find $k \in \omega$ such that $\alpha \leq \beta_{k}$. Since
    $\beta_{k} \leq \beta_{k'}$ for $k \leq k' \in \omega$, we can
    also suppose that $n \leq k$. But we obtain here a contradiction,
    since we mentioned before that $\beta_{m} < \alpha$ for
    $m \geq n$, in particular $\beta_{k} < \alpha$.

    The proof of the second statement is now a straightforward
    induction on the ordinal $\alpha$.
    If $\alpha = \beta + 1$ is a successor ordinal, then
    \begin{align*}
      \phi_{\MO}^{\alpha}(\emptyset) & =
      \phi_{\MO}(\phi_{\MO}^{\beta}(\emptyset)) =
      \phi_{\MO}(S_{\beta}) = S_{\beta +1}\,.
    \end{align*}
      If $\alpha$ is a limit ordinal, then
      \begin{align*}
        \phi_{\MO}^{\alpha}(\emptyset) & = \bigcup_{\beta < \alpha}
        \phi_{\MO}^{\beta}(\emptyset)
        = \bigcup_{\beta < \alpha} S_{\beta} = S_{\alpha}\,.
      \end{align*}
      \ENDOFPROOF[Lemma~\ref{lemma:Salphaplusone}]
    \end{proof}

    We conclude the section by stating its main result.
    \begin{thm}
      \label{thm:omegaone}
      The closure ordinal of $\PHI(x)$ is $\omega_{1}$.
    \end{thm}
    \begin{proof}
      As we mentioned before the formula $\PHI(x)$ has a closure
      ordinal and $\closure{\PHI(x)}\leqslant \omega_{1}$, by
      Proposition~\ref{prop:w1upperbound}.  We claim that $\PHI_{\MO}$
      converges to its \LFP in exactly $\omega_{1}$ steps, that is, we
      have
      $\PHI_{\MO}^{\omega_{1}}(\emptyset) \not\subseteq
      \PHI^{\alpha}_{\MO}(\emptyset)$ for each $\alpha < \omega_{1}$.
      Our claim is verified as follows.
      By Lemma~\ref{lemma:Salphaplusone}, the claim is equivalent to
      $S_{\omega_{1}} \not\subseteq S_{\alpha} $, for each
      $\alpha < \omega_{1}$. The latter relation holds since if
      $\alpha < \omega_{1}$, then we can find an ordinal $\beta$ with
      $\alpha < \beta < \omega_{1}$, so the states $(n,\beta)$,
      $n \geq 0$, belong to $S_{\omega_{1}}\setminus S_{\alpha}$.
    \end{proof}
  
   
    Finally, we argue that a bimodal language is not needed for
    $\omega_{1}$ to be a closure ordinal. To this goal, let $\Psi$ be
    as in Example~\ref{ex:Thomason} and let
    \begin{align*}
      \PHI' &:= \tr(\PHI)[\, \pos\nec \bot /p\,]\,, & \MO' :=
      \simul{\MO}[\,\pair{p}{\eval[\MO]{\pos\nec\bot}}\,]\,,
    \end{align*}
    where $\PHI$ is the bimodal formula defined in
    equation~\eqref{eq:defphi}. As in the statement of
    Proposition~\ref{prop:bimodaltomonomodal}, we consider $\U{\MO'}$
    as a superset of $\U{\MO}$.
  \begin{thm}
    \label{thm:omegaonemonomodal}
    The monomodal formula $\PHI'$
    has closure ordinal $\omega_{1}$.
  \end{thm}
  \begin{proof}
    Consider the statement of
    Proposition~\ref{prop:bimodaltomonomodal}.  Since the
    correspondence $\phi \mapsto \phi'$ sends formulas in $\C(x)$ to
    formulas in $\C(x)$, $\PHI'$ is \continuous[$\aleph_{1}$] and
    therefore it has a closure ordinal bounded by $\omega_{1}$. To
    argue that the closure ordinal of $\PHI'$ is equal to $\omega_{1}$
    it is enough to consider the model $\MO'$ and rely on item (iii)
    of Proposition~\ref{prop:bimodaltomonomodal}.
  \end{proof}

\section{Closure under ordinal sum.}
\label{sec:ordinalsum}

In this section we prove 
that the ordinal sum of two closure ordinals of the modal
$\mu$-calculus is again a closure ordinal of this logic, as stated in
the next theorem.

\begin{thm}
  \label{thm:closuresum}
  Suppose $\phi_0(x)$ and $\phi_1(x)$ are monomodal formulas that
  have, respectively, $\alpha$ and $\beta$ as closure ordinals. 
  Then there is a monomodal formula $\Psi(x)$, constructible from
  $\phi_{0}$ and $\phi_{1}$, whose closure ordinal is
  $\alpha + \beta$.
\end{thm}
We prove the theorem through a series of observations. With the first
one, Lemma~\ref{lemma:acceptablemodel},
we make use of the \master modality $\Master$ of the propositional
modal $\mu$-calculus. In principle, the use \master modality in the
proof of Theorem~\ref{thm:closuresum} may be avoided, at the cost of
reducing its readability.  
Given a monomodal formula $\chi$ this modality is defined as follows:
\begin{align*}
  \Master\chi \eqdef \nu_{z}.(\,\chi \land \nec z\,)\,.
\end{align*}
The master modality allows us to focus on those models of a fixed
shape since they satisfy, globally, a given formula. Indeed, the
semantics of this modality is the following:
\begin{align*}
  \M,s \forces \Master \chi & \text{ if and only if } \M, s' \forces
  \chi, \text{ for each $s'$ reachable from $s$}.
\end{align*}
In particular, if $\M$ is a tree model, then $\M \forces \chi$ 
 if and only if
$\M,r \forces \Master \chi$, where $r$ is the root of the tree.
Let us mention that the modality $\Master$ satisfies all the axioms
(reflexivity and transitivity) of the modal system \Sfour, see
e.g. \cite[\S~2.5]{Kracht1999}, and yields a deduction theorem for the
modal $\mu$-calculus, see \cite{KrachtMCR, SantocanaleFICS02}.

When $\M \forces \Master \chi$ (that is, $\M,s \forces \Master \chi$,
for each $s \in \UM$), we say that $\M$ is \emph{\chiacceptable}.
\begin{lem}
  \label{lemma:acceptablemodel}
  Let $\chi$ and $\psi(x)$ be monomodal formulas and define
  $\Psi(x)\eqdef \Master \chi \land
  \psi(x)$. 
  An ordinal $\gamma$ is the closure ordinal of the formula $\Psi(x)$
  if and only if
  (i) the formula $\psi(x)$ converges to its least fixed point in at
  most $\gamma$ steps on all the \chiacceptable models, and (ii) there
  exists an \chiacceptable model on which the formula $\psi(x)$
  converges to its least fixed point in exactly $\gamma$ steps.
\end{lem}
\begin{proof}
  If $\NN$ is an \chiacceptable model, then
  $\eval[\NN]{\Master \chi} = \U{\NN}$, so that
  $\Psi_{\NN} = \psi_{\NN}$.
  
  On the other hand, if $\M$ is any model, then the submodel of $\M$
  induced by $\eval[\M]{\Master \chi}$ is closed and
  \chiacceptable. Call $\NN$ such a submodel of $\M$. Thus, by
  Proposition~\ref{prop:induced&closedsubmodelalpha}.(ii), for any
  ordinal $\gamma \geq 0$, we have
  \begin{align}
    \label{eq:indsubmodel}
    \Psi_\M^\gamma(\emptyset)
    & =
    \psi_\NN^\gamma(\emptyset)\,.
  \end{align}
  The statement of the lemma immediately follows. 
\end{proof}

\medskip

Next, recall that we write $\trp(\phi)$ in place of $\tr(\phi)$ if
$\Psi$ is the collection of formulas given in
Example~\ref{ex:inducedsubframe}.
Let $\phi_0(x)$ and $\phi_1(x)$ be monomodal formulas as in the
statement of Theorem~\ref{thm:closuresum}. For a
variable $p$ occurring neither in ${\phi_0}$ nor in ${\phi_1}$, we define  
\begin{align}
  \label{eq:defchi}
  \chi  & \eqdef\chi_0 \land \chi_1 \text{ with }
  \chi_0 \eqdef p \vee (\, \nec \neg p \, \land \, 
  \mu_z.\phi_0(z)\,)
  \text{ and }
  \chi_1 
  \eqdef \neg p \vee  \mu_z.\trp(\phi_1(z)) \,, 
  \\
  \psi(x) & \eqdef(\, \neg p \land \phi_0(x) \,)
  \,\vee\,(\,\phiU(x)\land \nec( p \vee x) \,)\,,
  \label{eq:defpsi} \\
  \Psi(x) &\eqdef \Master \chi \land \psi(x)\,.
  \label{eq:defPsi}
\end{align}

From now on, we shall say  that a model $\NN$ is
\emph{\acceptable} if it is \chiacceptable, where $\chi$ is the
formula given in equation~\eqref{eq:defchi}.
We shall
argue that $\Psi(x)$ defined in~\eqref{eq:defPsi} has closure ordinal
$\alpha + \beta$ using Lemma~\ref{lemma:acceptablemodel}.
  
Next, we continue by studying the structure of an acceptable model
$\NN$ and how $\psi_{\NN}$ acts on it---where $\psi$ is the formula
defined in~\eqref{eq:defpsi}.  To this goal, let $\NN_{0}$ and
$\NN_{1}$ be the submodels of $\NN$ induced by $v(\neg p)$ and $v(p)$,
respectively.  To ease the reading, let $N_{0} \eqdef v(\neg p)$, and
$N_{1} \eqdef v(p)$.  A model $\NN$ is \acceptable if and only if
$N_{0}$ is a closed subset of $\U{\NN}$ (since
$\NN \forces p \vee \nec\neg p \equiv \neg p \impl \nec\neg p$) and
moreover
\begin{align*}
  N_0 & \subseteq \eval[\NN]{\,\mu_z.\phi_0(z)\,} \,, \qquad
  N_1  \subseteq \eval[\NN]{\,\mu_z.\trp(\phi_1(z))\,}\,.
\end{align*}
Let
also $\phiNz \eqdef (\phi_{0})\mphin{\NN_{0}}$ and
$\phiNu \eqdef (\phi_{1})\mphin{\NN_{1}}$, so
$\phiNz : P(N_{0}) \rto P(N_{0})$ and
$\phiNu : P(N_{1}) \rto P(N_{1})$.  We claim that $\psiN$ is of the
form
  \begin{align}
    \label{eq:psiN}
    \psiN(X) & = \phiNz(X \cap N_{0}) \cup (\phiNu(X \cap N_{1})
    \cap \DELTA(X \cap N_{0}))\,,
    \intertext{with}
    \DELTA(X) & \eqdef N_{1} \cap \nec_{\NN}(N_{0} \impl X)\,.
  \end{align}
  This is because,   for each
  $X \subseteq \U\NN$,
  \begin{align*}
    \psi_{\NN}(X) & = (\psi_{\NN}(X) \cap N_{0}) \cup (\psi_{\NN}(X)
    \cap N_{1})\,, \\
    \psi_{\NN}(X) \cap N_{0} & = \phiNz(X \cap
    N_{0})\,,  \\
    \psi_{\NN}(X) \cap N_{1} & =  \phiU\mphin{\NN_{1}}(X) \cap
    \nec_{\NN}(N_{0} \impl X)
    = N_{1} \cap \phiNu(X
    \cap N_{1}) \cap
    \nec_{\NN}(N_{0} \impl X)
    \\
    &
    = \phiNu(X
    \cap N_{1}) \cap
    N_{1} \cap \nec_{\NN}(N_{0} \impl (X \cap N_{0}))\,.
  \end{align*}
  We notice now that if $\NN$ is \acceptable, then
  \begin{align}
    \label{eq:Nz}
    N_0 & =\eval[\NN]{\mu_z.\phi_0(z)} \cap N_{0} =
    \eval[\NN_{0}]{\mu_z.\phi_0(z)} = \phiNz^{\alpha}(\emptyset)
    \intertext{and} 
    \label{eq:Nu}
    N_1 & = \eval[\NN]{\mu_z.\trp(\phi_1(z))} \cap
    N_{1} =
    \eval[\NN_{1}]{\mu_z.\phi_1(z)} = \phiNu^{\beta}(\emptyset)\,.
  \end{align}
  Observe that $\DELTA(X) = N_{1}$ whenever $N_{0} \subseteq X$ and
  therefore, using $\phiNz^{\alpha}(\emptyset) = N_{0}$, we have
   \begin{align}
     \label{eq:delta}
     \DELTA(X) & = N_{1}, \quad \text{whenever
       $X \supseteq \phiNz^{\alpha}(\emptyset)$}\,.
   \end{align}

   \begin{lem}
     \label{prop:Sumacceptablemodel}
     \label{lemma:Sumacceptablemodel}
     On every \acceptable model $\NN$ the equality
     $\psi_\NN^{\alpha + \beta}(\emptyset)=\U{\NN}$ holds and,
     consequently, the formula $\psi(x)$ converges within
     $\alpha + \beta$ steps.
   \end{lem}
   \begin{proof}
     Since $N_{0}$ is a closed subset of $\U{\NN}$, by
     Proposition~\ref{prop:induced&closedsubmodelalpha}, we have
    \begin{align}
    \label{eq:psiNz}
      \psiN^{\delta}(\emptyset) \cap N_{0} & =
      \psi_{\NN_{0}}^{\delta}(\emptyset) =
      \phiNz^{\delta}(\emptyset)\,
    \end{align}
    for each ordinal $\delta$. Consequently,
    $\psi_{\NN}^{\alpha + \gamma} (\emptyset)\cap N_{0} \supseteq
    \psi\mphin\NN^{\alpha} (\emptyset)\cap N_{0} = \phiNz^{\alpha}
    (\emptyset)$, for every ordinal $\gamma$.
   \begin{clm}
     \label{lemma:inclusionphipsi}
     The following relation  holds for every ordinal $\gamma \geq 0$:
     \begin{align}
       \label{eq:inclusion}
       \phiNu^{\gamma}(\emptyset) & \subseteq \psiN^{\alpha +
         \gamma}(\emptyset) \cap N_{1}\,.
     \end{align}
   \end{clm}
   \begin{proofofclaimNoQed}
     Clearly the relation holds for $\gamma =
     0$.  In order to prove the above inclusion, it will be enough to
     prove that it holds at a successor ordinal $\gamma
     +1$, assuming it holds at
     $\gamma$ (the inductive step to a limit ordinal is obvious).  We
     have
     \begin{align*}
       {\psiN^{\alpha+\gamma + 1}(\emptyset) \cap N_{1}} & =
       \phiNu(\psiN^{\alpha+\gamma}(\emptyset) \cap N_{1}) \cap
       \DELTA(\psiN^{\alpha + \gamma}(\emptyset) \cap N_{0})
       \\
       & = \phiNu(\psiN^{\alpha+\gamma}(\emptyset) \cap N_{1}) \cap
       \DELTA(\phiNz^{\alpha +\gamma}(\emptyset))\,, \tag*{by
         equation~\eqref{eq:psiNz},}
        \\
        & = \phiNu(\psiN^{\alpha+\gamma}(\emptyset) \cap N_{1}),
        \tag*{by equation~\eqref{eq:delta},}\\
        \tag*{by the IH,} & \supseteq
        \phiNu(\phiNu^{\gamma}(\emptyset))\,, \\
        & = \phiNu^{\gamma + 1}(\emptyset)\,. \tag*{\qedClaim}
     \end{align*}
   \end{proofofclaimNoQed}
   Therefore
   \begin{align*}
     \U{\NN}=N_0\cup N_1&= \phiNz^{\alpha}(\emptyset) \cup
     \phiNu^{\beta}(\emptyset) \tag*{using \eqref{eq:Nz} and \eqref{eq:Nu}}\\
     & \subseteq (\psiN^{\alpha + \beta}(\emptyset) \cap N_0) \cup
     (\psiN^{\alpha + \beta}(\emptyset) \cap N_1) = \psiN^{\alpha +
       \beta}(\emptyset)\,.
   \end{align*}
   This terminates the proof of Lemma~\ref{lemma:Sumacceptablemodel}.
 \end{proof}

 \begin{lem}
   \label{lemma:existenceOfModel}
   There exists an \acceptable model $\NN$ on which $\psi(x)$
   converges in exactly $\alpha + \beta$ steps.
 \end{lem}
 \begin{proof}
   Since the formulas ${\phi_0}(x)$ and ${\phi_1}(x)$ have,
   respectively, $\alpha$ and $\beta$ as closure ordinals, by
   Lemma~\ref{lemma:totallfp} there exist models
   $\M_{\gamma} = \tuple{\U{\M_{\gamma}},R_{\gamma},v_{\gamma}}$,
   $\gamma \in \set{\alpha,\beta}$, such that for every
   $\alpha'<\alpha$ and $\beta'<\beta$
   $\eval[\M_\alpha]{\mu_{x}. {\phi_0}(x) }=\U{\M_\alpha}=
   {\phi_0}\mphin{\M_\alpha}^\alpha(\emptyset)\neq {\phi_0}\mphin
   {\M_\alpha}^{\alpha'}(\emptyset)$ and
   $\eval[\M_\beta]{\mu_{x}.{\phi_1}(x)}=\U{\M_\beta}={\phi_1}\mphin
   {\M_\beta}^{\beta}(\emptyset)\neq {\phi_1}\mphin
   {\M_\beta}^{\beta'}(\emptyset)$.

   We construct now the model $\M_{\alpha + \beta}$ by making the
   disjoint union of the sets $\U{\M_{\alpha}}$ and $\U{\M_{\beta}}$,
   endowed with
   $R_{\alpha}\cup R_{\beta}\cup\set{(s, s')\mid s\in
     \U{\M_{\beta}},s'\in \U{\M_{\alpha}}}$ and the
   valuation $v$ defined by 
   $v(q) \eqdef \U{\M_{\beta}}$, if $q = p$, and
   $v(q) \eqdef v_\alpha(q) \cup v_{\beta}(q)$ otherwise.
   Let us put $\NN\eqdef M_{\alpha + \beta}$.  Observe now that
   $\M_{\alpha + \beta}$ is an \acceptable model and that
   $\DELTA(X)=\emptyset$ for every $X\subseteq\U{\NN}$ such that
   $X\cap N_0\subsetneq\phiNz^{\alpha}(\emptyset)$.  Because of this,
   the inclusion~\eqref{eq:inclusion} is actually an equality, as
   stated and proved next. 
   \begin{clm}
     \label{lemma:inclusionphipsiBISEQUALITY}
     Suppose that $\phiNz^{\delta}\eset$ is strictly included in
     $N_{0}$ for $\delta < \alpha$ and that $\DELTA(X) = \emptyset$
      whenever $X$ is a proper subset of $N_{0}$. Then, the
      inclusion~\eqref{eq:inclusion} is an equality, for each ordinal
      $\gamma \geq 0$:
      \begin{align}
        \label{eq:inclusionIsEquality}
        \phiNu^{\gamma}(\emptyset) & = \psiN^{\alpha +
          \gamma}(\emptyset) \cap N_{1}\,.
      \end{align}
    \end{clm}
    \begin{proofofclaimNoQed}
      It is enough to verify that the above equality holds for
      $\gamma = 0$. Indeed, for $\gamma > 0$, we can use the
      same 
      computations as in the proof of the claim in
      Lemma~\ref{lemma:Sumacceptablemodel}, by substituting an
      equality for the inclusion in the inductive hypothesis.

      If
      $\delta < \alpha$, then 
      \begin{align*}
        \psiN^{\delta+1}\eset \cap N_{1} & \subseteq
        \DELTA(\psiN^{\delta}\eset \cap N_{0} )
        = \DELTA(\phiNz^{\delta}\eset) = \emptyset\,,
      \end{align*}
      since by assumption $\phiNz^{\delta}\eset$ is strictly included
      in $N_{0}$.
      In particular, if $\alpha$ is a successor ordinal, we have $\psiN^{\alpha +
        \gamma}(\emptyset) \cap N_{1} = \emptyset$.
      If $\alpha$ is a limit ordinal, then
      \begin{align*}
        \psiN^{\alpha} \eset \cap N_{1} & \subseteq \bigcup_{\delta <
          \alpha} \psiN^{\delta +1}\eset  \cap N_{1} \subseteq \bigcup_{\delta <
          \alpha} \DELTA(\psiN^{\delta}\eset  \cap N_{0}) = \emptyset\,.
        \tag*{\qedClaim}
      \end{align*}
    \end{proofofclaimNoQed}

    We can then use equations~\eqref{eq:psiNz}
    and~\eqref{eq:inclusionIsEquality} to obtain
    \begin{align*}
      \psiN^{\alpha}(\emptyset) = \phiNz^{\alpha}(\emptyset)
      \supsetneqq
      \phiNz^{\delta}(\emptyset)=\psiN^{\delta}(\emptyset)
      \tand
      \psiN^{\alpha + \gamma}(\emptyset)=N_0 \cup
      \phiNu^{\gamma}(\emptyset)
    \end{align*}
    for ordinals $\gamma,\delta$ such that $\delta <\alpha$.
    Finally,
    \begin{align*}
      \psiN^{\alpha + \beta}(\emptyset) =\U{\NN} = N_0 \cup
      \phiNu^{\beta}(\emptyset) 
      \supsetneqq 
      N_0 \cup \phiNu^{\gamma}(\emptyset)= \psiN^{\alpha +
        \gamma}(\emptyset) \,, \quad\text{for $\gamma <\beta$}\,.
    \end{align*}
    This shows that $\psi$ converges in exactly $\alpha + \beta$ steps
    in $\M_{\alpha + \beta}$ and therefore terminates the proof of
    Lemma~\ref{lemma:existenceOfModel}.
  \end{proof}
  
  Now Theorem~\ref{thm:closuresum} immediately follows from 
    Lemmas~\ref{lemma:acceptablemodel},
    ~\ref{lemma:Sumacceptablemodel} and~\ref{lemma:existenceOfModel} when  applied to the formulas $\chi, \psi$ and $\Psi$ defined in~\ref{eq:defchi}, ~\ref{eq:defpsi} and~\ref{eq:defPsi} respectively.

    \medskip

    In the introduction we used $\OrdLmu$ to denote the set of closure
    ordinals of formulas of the modal $\mu$-calculus.  This section
    yields an insight on \Czarnecki's work \cite{Czarnecki} by proving
    the closure of $\OrdLmu$ under the ordinal sum.  The general
    problem of characterizing $\OrdLmu$ is open.  At the time of
    writing this paper, it is our opinion that still a few ordinals are
    known to belong to $\OrdLmu$---all of them can be
    constructed from the cardinals $1, \omega$ and $\omega_{1}$ by
    iterating the binary ordinal sum.
    Our results from Section~\ref{sec:continuousfragment} show that no
    other infinite regular cardinal $\kappa$ (apart from $\omega$ and
    $\omega_{1}$) can be proved to belong to $\OrdLmu$ in a
    straightforward way, that is, by relying on the \kcontinuity of
    some formula in $\Lmu$ and on the generalized Kleene theorem
    (Proposition~\ref{prop:convkcontinuous}). 
    Therefore, any other membership of $\OrdLmu$ requires a very
    different justification from the known ones.  
    New questions about $\OrdLmu$ need to be raised, such as whether
    this set is closed under other ordinal operations. Let us mention
    that a 
    recent work \cite{Milanese} exhibits a rich structure
    for
    closure ordinals of the modal $\mu$-calculus on bidirectional
    models.  It is conceivable that 
    studying closure ordinals on restricted classes of models will
    eventually yield a finer understanding of the structure of
    $\OrdLmu$.


\bibliographystyle{abbrv}
\bibliography{biblio}

\end{document}